\title{Recognition of Linear and Star Variants of Leaf Powers is in P} 
\author{Benjamin Bergougnoux}{University of Bergen, Norway}{Benjamin.Bergougnoux@uib.no}{}{}
\author{Svein H\o gemo}{University of Bergen, Norway}{Svein.Hogemo@uib.no}{}{}
\author{Jan Arne Telle}{University of Bergen, Norway}{Jan.Arne.Telle@uib.no}{}{}
\author{Martin Vatshelle}{University of Bergen, Norway}{Martin.Vatshelle@uib.no}{}{}
\authorrunning{B. Bergougnoux, S. H\o gemo, J. A. Telle and M. Vatshelle} 
\keywords{Leaf powers, Co-threshold tolerance graphs, Interval graphs, Neighborhood subtree} 
\newtheorem{fact}[theorem]{Fact}
\def\bQ{\mathbb{Q}}
\def\cA{\mathcal{A}}
\def\cB{\mathcal{B}}
\def\cF{\mathcal{F}}
\def\cL{\mathcal{L}}
\def\cI{\mathcal{I}}
\def\cN{\mathcal{N}}
\def\cX{\mathcal{X}}
\def\cT{\mathcal{T}}
\mathchardef\mhyphen="2D
\def\abs#1{\lvert #1\rvert}
\newcommand{\incomp}{<\kern-1.1em{>}}
\newcommand{\notmin}{\mathsf{notmin}}
\newcommand{\new}{\mathsf{new}}
\def\maxcap{X\text{-}\max_{\cap}}
\def\mincap{X\text{-}\min_{\cap}}
\newcommand{\cc}{\mathsf{CC}}
\newcommand{\w}{\mathsf{w}}
\renewcommand{\min}{\mathsf{min}}
\renewcommand{\max}{\mathsf{max}}
\renewcommand{\emptyset}{\varnothing}
\renewcommand{\leq}{\leqslant}
\renewcommand{\geq}{\geqslant}
\renewcommand{\mid}{\, : \, }
\begin{document}

\maketitle  

\begin{abstract}

A $k$-leaf power of a tree $T$  is a graph $G$  whose vertices are the leaves of  $T$ and whose edges connect pairs of leaves whose distance in $T$ is at most $k$. A graph is a leaf power if it is a $k$-leaf power for some $k$.
        Over 20 years ago, Nishimura et al. [J. Algorithms, 2002] asked if recognition of leaf powers was in P. Recently, Lafond [SODA 2022] showed an XP algorithm when parameterized by $k$, while leaving the main question open. In this paper, we explore this question from the perspective of two alternative models of leaf powers, showing 
        that both a linear and a star variant of leaf powers can be recognized in polynomial-time.
        
        

\keywords{Leaf power  \and Co-threshold tolerance graphs \and Interval graphs }
\end{abstract}

\section{Introduction}

Leaf powers were introduced by Nishimura et al. in \cite{NishimuraRT02}, and have enjoyed a steady stream of research. Leaf powers are related to the problem of reconstructing \emph{phylogenetic trees}. 
For an integer $k$, a graph $G$ is a $k$-leaf power if there exists a tree $T$ -- called a \emph{leaf root} -- with a one-to-one correspondence between $V(G)$ and the leaves of $T$, such that two vertices $u$ and $v$ are neighbors in $G$ iff the distance between the two corresponding leaves in $T$ is at most $k$. $G$ is a leaf powers if it is a $k$-leaf power for some  
$k$.
The most important open problem in the field is whether leaf powers can be recognized in polynomial time. 

Most of the results on leaf powers have followed two main lines, focusing either on the distance values $k$ or on the relation of leaf powers to other graph classes, see e.g. the survey by Calamoneri et al.~\cite{CalamoneriS16}. 
For the first approach, steady research for increasing values of $k$ has shown that $k$-leaf powers for any $k\leq 6$ is recognizable in polytime~\cite{BrandstadtL06,BrandstadtLS08,ChangK07,DomGHN05,Ducoffe19,NishimuraRT02}. Moreover, the recognition of $k$-leaf powers is known to be FPT parameterized by $k$ and the degeneracy of the graph~\cite{EppsteinH20}. Recently Lafond~\cite{Lafond22} gave a polynomial time algorithm to recognize $k$-leaf powers for any constant value of $k$.
For the second approach, we can mention that interval graphs~\cite{BrandstadtLS08} and rooted directed path graphs~\cite{BrandstadtHMW10} are leaf powers, and also that leaf powers have mim-width one~\cite{JaffkeKST19} and are strongly chordal. 
Moreover, an infinite family of strongly chordal graphs that are not leaf powers has been identified~\cite{Lafond17}; see also Nevris and Rosenke~\cite{NevriesR16}.

To decide if leaf powers are recognizable in polynomial time, it may be better not to focus on the distance values $k$. Firstly, the specialized algorithms for $k$-leaf powers for $k \leq 6$ do not seem to generalize. Secondly, the recent XP algorithm of Lafond~\cite{Lafond22} uses techniques that will not allow removing $k$ from the exponent.
%
In this paper we therefore take a different approach, and consider alternative models for leaf powers that do not rely on a distance bound. In order to make progress towards settling the main question, we consider fundamental restrictions on the shape of the trees, in two distinct directions: subdivided caterpillars (linear) and subdivided stars, in both cases showing polynomial-time recognizability. We use two models: weighted leaf roots for the linear case and NeS models for the stars. 



The first model uses rational edge weights between 0 and 1 in the tree $T$ which allows to fix a bound of 1 for the tree distance. It is not hard to see that this coincides with the standard definition of leaf powers using an unweighted tree $T$ and a bound $k$ on distance. Given a solution of the latter type we simply set all edge weights to $1/k$, while in the other direction we let $k$ be the least common denominator of all edge weights and then subdivide each edge a number of times equal to its weight times $k$. 

The second model arises by combining the result of Brandstädt et~al. that leaf powers are exactly the fixed tolerance NeST graphs~\cite[Theorem 4]{BrandstadtHMW10} with the result of
Bibelnieks et~al.~\cite[Theorem 3.3]{BibelnieksD93} that these latter graphs are exactly those that admit what they call a ``neighborhood subtree intersection representation'', that we choose to call a NeS model. 
NeS models are a generalization of interval models: 
by considering intervals of the line as having a center that stretches uniformly in both directions, we can generalize the line to a tree embedded in the plane, and the intervals to embedded subtrees with a center, stretching uniformly in all directions from the center, along tree edges. Thus a NeS model of a graph $G$ consists of an embedded tree and one such subtree for each vertex, such that two vertices are adjacent in $G$ iff their subtrees have non-empty intersection. Precise definitions are given later. The leaf powers are exactly the graphs having a NeS model. Some results are much easier to prove using NeS models, to illustrate this, we show that leaf powers are closed under several operations such as the addition of a universal vertex (see \cref{lem:reductionrules}). 

We show that fundamental constraints on these models allow polynomial-time recognition.
Using the first model, we restrict to edge-weighted caterpillars, i.e. trees with a path containing all nodes of degree 2 or more. 
We call linear leaf power a graph with such model.
Brandst\"adt et al.~\cite{BrandstadtH08} considered leaf roots restricted to caterpillars (see also \cite{CalamoneriFS14}) in the unweighted setting, showing that unit interval graphs are exactly the  $k$-leaf powers for some $k$ with a leaf root being an unweighted caterpillar. 
In the unweighted setting, linear leaf powers are graphs with a leaf root that is a {\em subdivision} of a caterpillar.
We show that linear leaf powers are exactly the \emph{co-threshold tolerance graphs} \cite{MonmaRT88}, and combined with the algorithm of Golovach et al.~\cite{GolovachHLMSSS17} this implies that we can recognize linear leaf powers in $O(n^2)$ time. Our proof goes via the equivalent concept of blue-red interval graphs introduced by \cite{GolumbicWL14}, see Figure~\ref{fig:fullexample}.

The recognition of linear leaf powers in polynomial time could have practical applications
for deciding whether the most-likely evolutionary tree associated with a set of organisms has a linear topology.
Answering this question might find particular relevance inside the field of Tumor phylogenetics where,  under certain model assumptions, linear topologies are considered more likely~\cite{Tumor2,Tumor1}.

For NeS models, we restrict to graphs having a NeS model where the embedding tree is a star, and show that they can be recognized in polynomial time. Note that allowing the embedding tree to be a subdivided star will result in the same class of graphs. Our algorithm uses the fact that the input graph must be a chordal graph, and for each maximal clique $X$ we check if $G$ admits a star NeS model where the set of vertices having a subtree containing the central vertex of the star is $X$.
To check this we use a combinatorial characterization, that we call a ``good partition'', of a star NeS model.


\section{Preliminaries}

For positive integer $k$, denote by $[k]$ the set $\{1,2,\dots,k\}$.
A partition of a set $S$ is a collection of non-empty disjoint subsets $B_1,\dots,B_t$  of $S$ -- called \textit{blocks} -- such that $S=B_1\cup\dots\cup B_t$.
Given two partitions $\cA,\cB$ of $S$, we say  $\cA\sqsubseteq \cB$ if every block of $\cA$ is included in a block of $\cB$, i.e. $\sqsubseteq$ is the \emph{refinement} relation.

\noindent{\bf Graph.}
Our graph terminology is standard and we refer to \cite{Diestel12}. 
The vertex set of a graph $G$ is denoted by $V(G)$ and its edge set by $E(G)$. 
An edge between two vertices $x$ and $y$ is denoted by $xy$ or $yx$. 
The set of vertices that is adjacent to $x$ is denoted by $N(x)$.
A vertex $x$ is simplicial if $N(x)$ is a clique.
Two vertices $x,y$ are true twins if $xy\in E(G)$ and $N(x)\setminus\{y\}=N(y)\setminus\{x\}$.
Given $X\subseteq V(G)$, we denote by $G[X]$ the graph induced by $X$.
Given a vertex $v\in V(G)$, we denote the subgraph $G[V(G)\setminus \{v\}]$ by $G-v$.
We denote by $\cc(G)$ the partition of $G$ into its connected components.
Given a tree $T$ and an edge-weight function $\w : E(T) \to \bQ$, the distance between two vertices $x$ and $y$ is denoted by $d_T(x,y)$ is $\sum_{e\in E(P)}\w(e)$ with $P$ is the unique path between $x$ and~$y$.

\noindent{\bf Leaf power.}	
In the Introduction we have already given the standard definition of leaf powers and leaf roots, and also we argued the equivalence with the following. Given a graph $G$, a leaf root of $G$ is a pair $(T,\w)$ of a  tree $T$ and a rational-valued weight function $\w : E(T) \to [0,1]$ such that the vertices of $G$ are the leaves of $T$ and for every $u,v\in V(G)$, $u$ and $v$ are adjacent iff $d_T(u,v)\leq 1$.
Moreover, if $T$ is a caterpillar we call $(T,\w)$ a \emph{linear leaf root}.
A graph is a \emph{leaf power} if it admits a leaf root and it is a linear leaf power if it admits a linear leaf root. 
Since we manipulate both the graphs and the trees representing them, the vertices of trees will be called \emph{nodes} to avoid confusion.

\noindent{\bf Interval graphs.}
A graph	$G$ is an \emph{interval graph} if there exists a set of intervals in $\bQ$, $\cI = (I_v)_{v\in V(G)}$, such that for every pair of vertices $u,v\in V(G)$, the intervals $I_v$ and $I_u$ intersect iff $uv\in E(G)$.
We call $(I_v)_{v\in V(G)}$ an \emph{interval model} of $G$.
For an interval $I=[\ell,r]$,
we define the midpoint of $I$ as $(\ell+r)/2$ and its length as $r-\ell$.

\noindent{\bf Clique tree.} For a chordal graph $G$, a clique tree $CT$ of $G$ is a tree whose vertices are the maximal cliques of $V(G)$ and for every vertex $v\in V(G)$, the set of maximal cliques of $G$ containing $v$ induces a subtree of $CT$.
Figure~\ref{fig:CTandNES} gives an example of clique tree.
Every chordal graph admits $O(n)$ maximal cliques and given a graph $G$, in time $O(n+m)$ we can construct a clique tree of $G$ or confirm that $G$ is not chordal~\cite{HabibMPV00,RoseTL76}.
When a clique tree is a path, we call it a clique path.
We denote by $(K_1,\dots,K_k)$ the clique path whose vertices are $K_1,\dots,K_k$ and where $K_i$ is adjacent to $K_{i+1}$ for every $i\in [k-1]$.

\section{Linear leaf powers}\label{sec:equivdef}

In this section we show that linear leaf powers are exactly the \emph{co-threshold tolerance graphs} (co-TT graphs). Combined with the algorithm in~\cite{GolovachHLMSSS17}, this implies that we can recognize linear leaf powers in $O(n^2)$ time.


Co-TT graphs were defined by Monma, Trotter and Reed in \cite{MonmaRT88}; we will not define them here as we do not use this characterization.
Rather, we work with the equivalent class of \emph{blue-red interval graphs}~\cite[Proposition 3.3]{GolumbicWL14}. 

\begin{definition}[Blue-red interval graph]\label{def:redblue}
	A graph $G$ is a \emph{blue-red interval} graph if there exists a bipartition $(B,R)$ of $V(G)$ and an interval model $\cI=(I_v)_{v\in V(G)}$ (with $(B,R,\cI)$ called a blue-red interval model) such that 
	$E(G) = \{b_1b_2 \mid b_1,b_2\in B \text{ and } I_{b_1}\cap I_{b_2} \neq \emptyset \} \cup \{rb \mid r\in R , b\in B \text{ and } I_r \subseteq I_b\}.$
\end{definition}
The red vertices induce an independent set, $(I_b)_{b\in B}$ is an interval model of $G[B]$, and we have a blue-red edge for each red interval contained in a blue interval.
The following fact can be easily deduced from Figure \ref{fig:intersectioninclusion}.
\begin{fact}\label{fact:intervals}
	Consider two intervals $I_1,I_2$ with lengths $\ell_1,\ell_2$ and midpoints $m_1,m_2$ respectively.
	We have $I_1\cap I_2\neq \emptyset$ iff $|m_1-m_2|\leq \frac{\ell_1+\ell_2}{2}$.
	Moreover, we have $I_2\subseteq I_1$ iff $|m_1-m_2| \leq \frac{\ell_1 - \ell_2}{2}$.
\end{fact}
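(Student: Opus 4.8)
The plan is to write each interval in terms of its midpoint and half-length, and then reduce both equivalences to the textbook criteria for when two closed intervals intersect and when one contains the other. Concretely, set
\[
I_1 = \left[ m_1 - \tfrac{\ell_1}{2},\, m_1 + \tfrac{\ell_1}{2} \right], \qquad I_2 = \left[ m_2 - \tfrac{\ell_2}{2},\, m_2 + \tfrac{\ell_2}{2} \right],
\]
which is exactly the statement that $I_i$ has midpoint $m_i$ and length $\ell_i$. Everything afterwards is a direct manipulation of these endpoints.

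For the intersection claim, I would recall that two closed intervals $[a_1,b_1]$ and $[a_2,b_2]$ meet iff $a_1 \leq b_2$ and $a_2 \leq b_1$. By the symmetry of the desired inequality in the indices $1,2$, I would assume without loss of generality that $m_1 \leq m_2$; then the condition $a_2 \leq b_1$ is the only one that can fail, and it reads $m_2 - \tfrac{\ell_2}{2} \leq m_1 + \tfrac{\ell_1}{2}$, i.e. $m_2 - m_1 \leq \tfrac{\ell_1 + \ell_2}{2}$. Since $m_2 - m_1 = |m_1 - m_2| \geq 0$ under our assumption, this is precisely $|m_1 - m_2| \leq \tfrac{\ell_1 + \ell_2}{2}$, giving the first equivalence.

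For the inclusion claim, I would use that $I_2 \subseteq I_1$ iff $a_1 \leq a_2$ and $b_2 \leq b_1$. Substituting the endpoints, these become $m_1 - \tfrac{\ell_1}{2} \leq m_2 - \tfrac{\ell_2}{2}$ and $m_2 + \tfrac{\ell_2}{2} \leq m_1 + \tfrac{\ell_1}{2}$, which rearrange to $m_1 - m_2 \leq \tfrac{\ell_1 - \ell_2}{2}$ and $m_2 - m_1 \leq \tfrac{\ell_1 - \ell_2}{2}$ respectively. Together these are equivalent to $|m_1 - m_2| \leq \tfrac{\ell_1 - \ell_2}{2}$, as claimed. (Note this also forces $\ell_2 \leq \ell_1$, consistent with $I_2 \subseteq I_1$.)

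There is no real obstacle here: the statement is an elementary fact, and once the intervals are written in midpoint--half-length form the only thing to be slightly careful about is handling the absolute value cleanly, which the symmetry argument in the intersection case and the two-sided inequality in the inclusion case dispatch immediately. The figure referenced in the statement already conveys the geometric content, so the proof is essentially a one-line verification in each direction.
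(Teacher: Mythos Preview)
Your argument is correct and is exactly the elementary verification the paper has in mind: the paper does not give a written proof at all but simply states that the fact ``can be easily deduced from Figure~\ref{fig:intersectioninclusion}'', and your midpoint--half-length computation is precisely what that figure encodes.
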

\begin{figure}[ht]
	\centering
	\includegraphics[width=\linewidth]{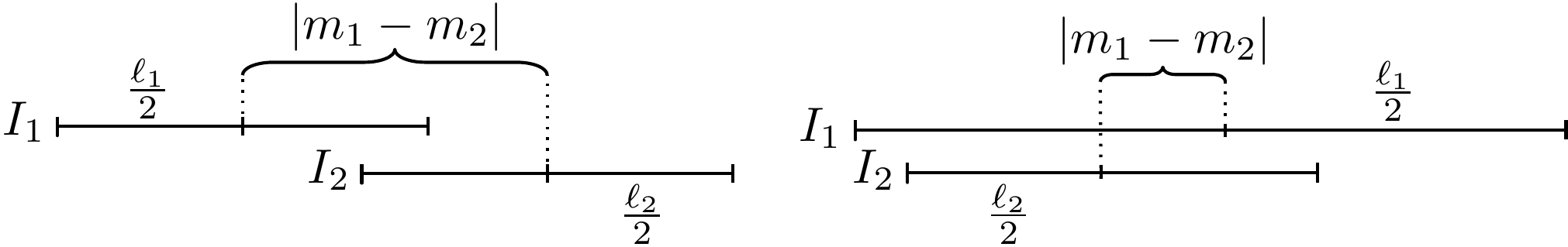}
	\caption{Example of two intervals overlapping and one interval containing another one.}
	\label{fig:intersectioninclusion}
\end{figure}
%
%
To prove that linear leaf powers are exactly blue-red interval graphs, we use a similar construction as the one used in~\cite[Theorem 6]{BrandstadtH08} to prove that every interval graph is a leaf power, but in our setting, we have to deal with red vertices and this complicates things quite a bit.
\begin{restatable}[$\star$]{theorem}{equivalence}
	    \label{thm:equivalence}
			$G$ is a blue-red interval graph iff $G$ is a linear leaf power.
\end{restatable}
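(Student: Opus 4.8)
The plan is to prove both directions through a single dictionary that translates between pendant-edge weights in a linear leaf root and interval lengths in a blue-red model, with the midpoint of an interval corresponding to the spine position where its leaf is attached. Since $T$ is a caterpillar, its spine $P$ contains every node of degree at least two, and every vertex $v$ of $G$ -- being a leaf of $T$ -- hangs off $P$ by a single pendant edge $e_v$ of weight $\w_v$, attached at spine coordinate $x_v$ (taking the spine as a weighted line). The unique $u$--$v$ path then runs up one pendant, along the spine, and down the other, so $d_T(u,v)=\w_u+|x_u-x_v|+\w_v$, and hence $uv\in E(G)$ iff $|x_u-x_v|\leq 1-\w_u-\w_v$. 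The dictionary sends a leaf with $\w_v\leq \tfrac12$ to a \emph{blue} interval of length $\ell_v=1-2\w_v$, and a leaf with $\w_v>\tfrac12$ to a \emph{red} interval of length $\ell_v=2\w_v-1$, in both cases with midpoint $m_v=x_v$.

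For the forward direction (linear leaf power $\Rightarrow$ blue-red) I would fix a linear leaf root, read off $(\w_v,x_v)$ for each $v$, and define $(B,R)$ and $(I_v)_{v\in V(G)}$ by the dictionary above. It then remains to check the three types of pairs against \cref{fact:intervals}. A short computation gives $1-\w_u-\w_v=\tfrac{\ell_u+\ell_v}{2}$ for two blue vertices, so $uv\in E(G)$ iff $I_u\cap I_v\neq\emptyset$; it gives $1-\w_b-\w_r=\tfrac{\ell_b-\ell_r}{2}$ for a blue-red pair, so $br\in E(G)$ iff $I_r\subseteq I_b$; and it gives $1-\w_{r_1}-\w_{r_2}=-\tfrac{\ell_{r_1}+\ell_{r_2}}{2}<0$ for two red vertices, which can never exceed $|x_{r_1}-x_{r_2}|\geq 0$, so $R$ is independent. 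These are exactly the conditions of \cref{def:redblue}.

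For the converse I would run the dictionary backwards: starting from a blue-red model, build a spine as a weighted path whose nodes sit at the distinct midpoints (so that spine distance equals $|m_u-m_v|$), and attach each vertex $v$ by a pendant edge of weight $\tfrac{1-\ell_v}{2}$ if $v\in B$ and $\tfrac{1+\ell_v}{2}$ if $v\in R$. The same three identities, read in reverse, show that the threshold $d_T(u,v)\leq 1$ realizes exactly the blue-blue intersections and the blue-red inclusions while forcing $d_T(r_1,r_2)>1$; hence $(T,\w)$ is a leaf root of $G$, and it is linear because $T$ is a caterpillar by construction.

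The main obstacle is not the algebra but the normalization needed to make the construction legal. Leaf-root weights must lie in $[0,1]$, which forces me first to rescale the interval model so that all lengths are at most $1$ (giving pendant weights in $[0,1]$), and then to realize possibly large midpoint gaps by subdividing spine edges so that every individual spine weight stays in $[0,1]$; I must also argue these operations preserve the blue-red model. Equally delicate are the boundary and degenerate cases: the threshold $\w_v=\tfrac12$ (a blue interval of length $0$), coincident midpoints, and above all the strictness in the red-red case, where the positivity $\ell_r=2\w_r-1>0$ is precisely what guarantees non-adjacency. Handling these degeneracies uniformly -- for instance through a small rational perturbation of midpoints and lengths that preserves every intersection and inclusion relation -- is where the promised complications with red vertices concentrate, and is the part I would write out in full detail.
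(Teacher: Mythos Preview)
Your proposal is correct and follows essentially the same approach as the paper: the same dictionary between pendant weights and interval lengths (blue $\leftrightarrow \w_v\le\tfrac12$, $\ell_v=1-2\w_v$; red $\leftrightarrow \w_v>\tfrac12$, $\ell_v=2\w_v-1$), the same identification of midpoint with spine position, and the same case analysis via \cref{fact:intervals}. The only cosmetic difference is in the normalization step for the blue-red $\Rightarrow$ linear direction: the paper reduces to connected $G$ and rescales so that $0<\ell_v\le 1$ (adding a small $\epsilon$ to zero-length intervals), which automatically keeps consecutive-midpoint gaps at most $1$, whereas you propose subdividing long spine edges and perturbing degenerate intervals; both fixes achieve the same goal.
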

\begin{proof}
	($\Rightarrow$) Let $G$ be a blue-red interval graph with blue-red interval model $(B,R,(I_v)_{v\in V(G)})$.
	We assume w.l.o.g. that $G$ is connected as otherwise we can obtain a leaf root of $G$ from leaf roots of its connected components by creating a new node adjacent to an internal node of each leaf root via edges of weight 1.
	For every $v\in V(G)$, we denote by $\ell(v)$ and  $m(v)$ the length and the midpoint of the interval $I_{v}$.
	We suppose w.l.o.g. that, for every $v\in V(G)$, we have $0<\ell(v)\leq 1$ as we can always divide the endpoints of all intervals by $\max\{\ell(v)\mid v\in V(G)\}$ and add some $\epsilon>0$ to the right endpoints of the intervals of length 0.
	We fix an ordering on $V(G)$, $\{v_1,\dots,v_n\}$, such that $m(v_i) \leq m(v_j)$ for every $i<j$.
	
	We define $T$ a caterpillar and its edges $f_1,\dots,f_n,e_1,\dots,e_{n-1}$ as depicted in Figure~\ref{fig:linearlayout}.
	Let $\w : E(T)\to [0,1]$ such that for every $i\in [n-1]$ we have $\w(e_i)=m(v_{i+1})-m(v_i)$ and for every $i\in [n]$ the weight of $f_i$ is $\frac{1-\ell(v_i)}{2}$ if $v_i\in B$ and $\frac{1+\ell(v_i)}{2}$ if $v_i\in R$.
	Since we assume that $G$ is connected and the length of the intervals are at most 1, for every $i\in [n-1]$, we have $\w(e_i)=m(v_{i+1})-m(v_i) \leq 1$ so the weights are well-defined.
	
	\begin{figure}[ht]
		\centering
		\includegraphics[width=0.7\linewidth]{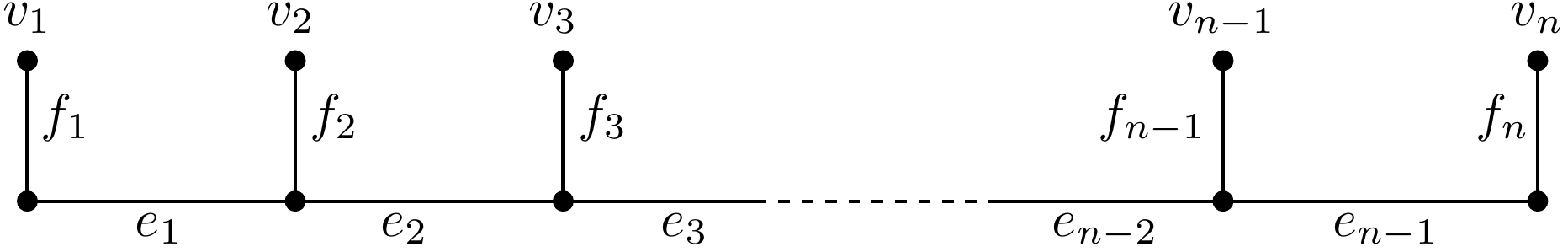}
		\caption{The linear branch-decomposition used to prove that every blue-red interval graph is a linear leaf power. Observe that the weight of an edge $e_i$ can be 0 if $I_{v_i}$ and $I_{v_{i-1}}$ have the same midpoint, in that case, $e_i$ can be contracted.}
		\label{fig:linearlayout}
	\end{figure}
	
	We claim that $(T,\w)$ is a linear leaf root of $G$.
	Let $i,j\in[n]$ such that $i<j$.
	We have to prove that $v_iv_j \in E(G)$ iff $d_T(v_i,v_j)\leq 1$.
	The edges of the path between $v_i$ and $v_j$ in $T$ are $f_i,e_i,e_{i+1},\dots,e_{j-1},f_j$. By construction, we have 
	{\small \begin{equation*}
			\w(e_i)+\dots+\w(e_{j-1}) = m_{i+1}-m_i + m_{i+2} - m_{i+1} +\dots + m_{j-1} - m_{j-2} + m_j - m_{j-1} = m_j - m_i.
	\end{equation*}}
	Hence $d_T(v_i,v_j) = \w(f_i) + |m_j - m_i| + \w(f_j)$. 
	If $v_i$ and $v_j$ are red vertices, then $\w(f_i),\w(f_j)>\frac{1}{2}$ because we suppose that $\ell(x)>0$ for every vertex $x$.
	Thus, $d_T(v_i,v_j)>1$ for every pair of red vertices $(v_i,v_j)$.
	So at least one vertex among $v_i,v_j$ is blue.
	Suppose w.l.o.g. that $v_i$ is blue.
	As $d_T(v_i,v_j) = \w(f_i) + |m_j - m_i| + \w(f_j)$, we deduce that
	\begin{equation}\label{eq:distance}
		d_T(v_i,v_j)=\begin{cases} 1 + |m(v_i) - m(v_j)| -\left( \frac{\ell(v_i)+\ell(v_j)}{2} \right) &\text{ if } v_j\in B\\
			1 + |m(v_i) - m(u_i)| -\left( \frac{\ell(v_i)-\ell(v_j)}{2} \right) &\text{ if } v_j\in R.
		\end{cases} 
	\end{equation}
	If $v_j\in B$ then by Fact \ref{fact:intervals}, we have $I_{v_i}\cap I_{v_j}\neq \emptyset$ iff $|m(v_i) - m(v_j)| \leq \frac{\ell(v_i)+\ell(v_j)}{2}$.
	We conclude that $v_iv_j\in E(G)$ iff $d_T(v_i,v_j)= |m(v_i) - m(v_j)| -\left( \frac{\ell(v_i)+\ell(v_j)}{2} \right) + 1 \leq 1.$
	On the other hand, if $v_j\in R$ then $I_{v_j}\subseteq I_{v_i}$ iff $|m(v_i) - m(v_j)| \leq \frac{\ell(v_i)-\ell(v_j)}{2}$.
	We conclude that $v_iv_j\in E(G)$ iff $d_T(v_i,v_j)= |m(v_i) - m(v_j)| -\left( \frac{\ell(v_i)-\ell(v_j)}{2} \right) + 1 \leq 1.$
	Hence, $(T,\w)$ is a linear leaf root of $G$. This proves that every blue-red interval graph is a linear leaf power.
	
	\bigskip
	($\Leftarrow$) Let $G$ be a linear leaf power and  $(T,\w)$ a linear leaf root of $G$.
	Let $(u_1,\dots,u_t)$ be the path induced by the internal vertices of $T$.
	We suppose w.l.o.g. that $G$ does not contain isolated vertices as we can easily deal with such vertices by associating each of them with an interval that does not intersect the other intervals.
	Consequently, for every leaf $a$ of $T$ adjacent to some $u_i$, we have $d_T(a,u_i)\leq 1$.
	For every vertex $v\in V(G)$ whose neighbor in $T$ is $u_i$, we associate $v$ with an interval $I_v$ and a color such that the midpoint of $I_v$ is $m(v)= d_T(u_{1},u_{i})$ and
	\begin{itemize}
		\item if $\w(u_iv)\leq \frac{1}{2}$, the length of $I_v$ is $\ell(v)=1-2\w(u_iv)$ and the color of $v$ is blue,
		\item otherwise ($\frac{1}{2} < \w(u_iv)\leq 1$), the length of $I_v$ is $\ell(v)=2\w(u_iv)-1$ and the color of $v$ is red.
	\end{itemize}
	Let $B,R\subseteq V(G)$ be the sets of blue and red vertices respectively.
	Observe that the red vertices induce an independent set since their distance to the inner path is strictly more than $1/2$.
	By construction, we have the following equation for every vertex $v$ whose neighbor in $T$ is $u_i$
	\begin{equation}\label{eq:distance1}
		w(u_iv)=\begin{cases}
			\frac{1-\ell(v)}{2}  &\text{if } v\in B,\\
			\frac{1+\ell(v)}{2} & \text{if } v\in R.	
		\end{cases} 
	\end{equation}		
	Moreover, from the definition of the midpoint's, we deduce that for every $v_i,v_j\in V(G)$ whose neighbors in $T$ are respectively $u_i$ and $u_j$, we have 
	\begin{equation}\label{eq:distance2}
		d_T(v_i,v_j)=|m(v_i) - m(v_j)| + w(u_iv_i) + w(u_jv_j).
	\end{equation}
	From Equations~\ref{eq:distance1} and \ref{eq:distance2}, we deduce that Equation~\ref{eq:distance} holds also for this direction for every $v_i\in B$ and $v_j\in V(G)$.
	By Fact~\ref{fact:intervals} and with symmetrical arguments to the ones of the previous direction, we conclude that $(B,R,(I_v)_{v\in V(G)})$ is a blue-red interval model of $G$.
\end{proof}


\begin{figure}[ht]
	\centering
	\includegraphics[width=0.95\linewidth]{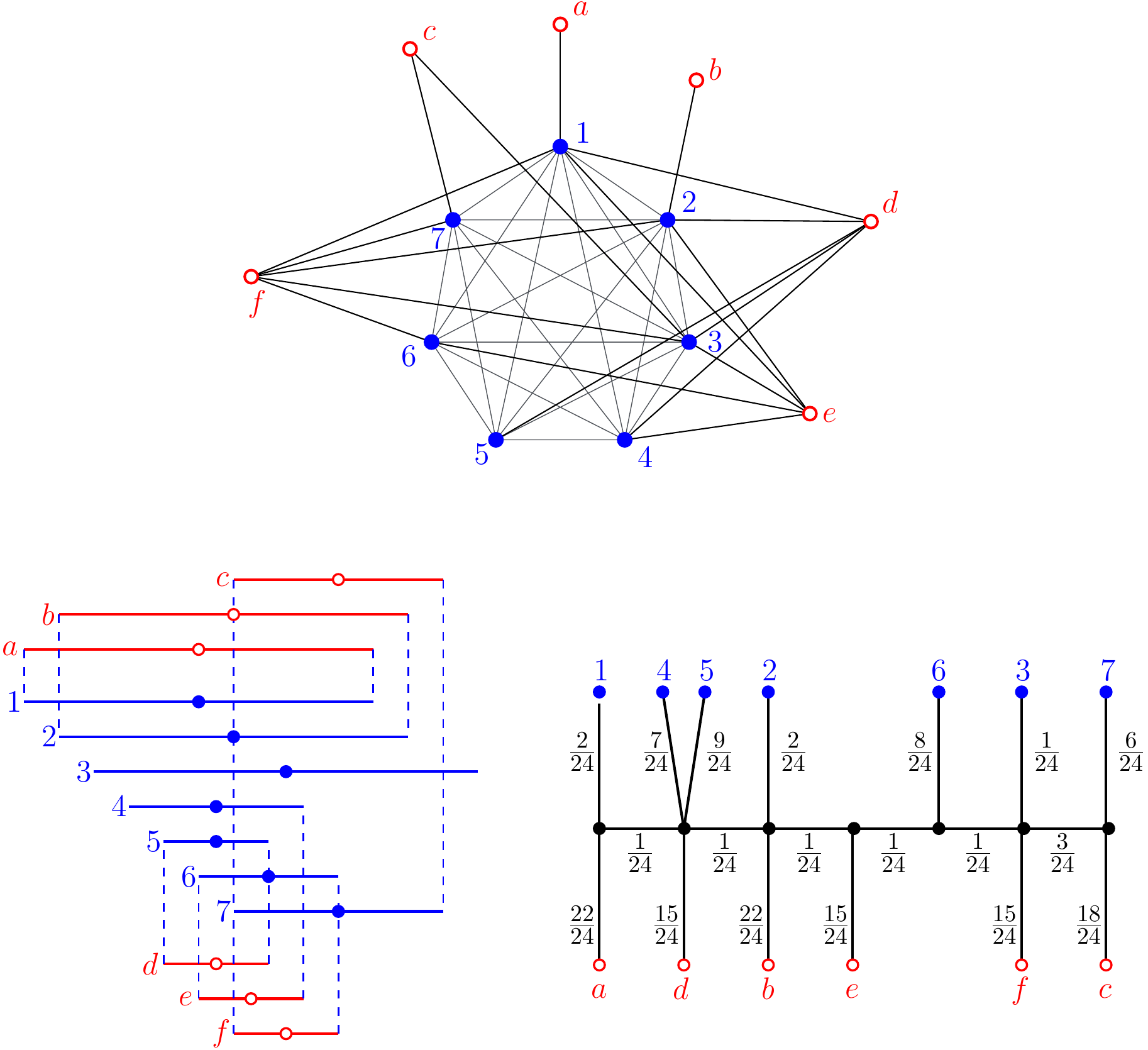}
	\caption{A split graph $G$ with a blue-red interval model of $G$ and a linear leaf root of $G$ obtained from the blue-red interval model by using the construction described in Theorem~\ref{thm:equivalence}. 
	Note the midpoint of each interval has been marked, to better understand where each vertex is placed in $T$. 
	Weights on edges is calculated as explained in the proof of Theorem~\ref{thm:equivalence} and note every edge has a weight that is a multiple of $\frac{1}{24}$ because the length of $I_3$ is 12.
	}
	\label{fig:fullexample}
\end{figure}

\section{Star NeS model}
\label{sec:starNES}

In this section, we first present an alternative definition of leaf powers through the notion of NeS models. 
We then show that we can recognize in polynomial time graphs with a \textit{star NeS model}: a NeS model whose embedding tree is a star (considering subdivided stars instead of stars does not make a difference).

For each tree $T$, we consider a corresponding tree $\cT$ embedded in the Euclidean plane so that each edge of $T$ corresponds to a line segment of $\cT$, these lines segments can intersect one another only at their endpoints, and the vertices of $T$ correspond (one-to-one) to the endpoints of the lines. 
Each line segment of $\cT$ has a positive Euclidean length. These embedded trees allow us to consider $\cT$ as the infinite set of points on the line segments of $\cT$. 
The notion of tree embedding used here is consistent with that found in \cite{Tamir1983}.
The line segments of $\cT$ and their endpoints are called respectively the edges and the nodes of $\cT$. The distance between two points $x,y$ of $\cT$ denoted by $d_{\cT}(x,y)$ is the length of the unique path in $\cT$  between $x$ and $y$ (the distance between two vertices of $T$ and their corresponding endpoints in $\cT$ are the same).

\begin{definition}[Neighborhood subtree, NeS-model] \label{def:NeS}
	Let $\cT$ be an embedding tree. For some point $c\in \cT$ and non-negative rational $w$, we define the neighborhood subtree with center $c$ and radius $w$ as the set of points $\{p \in \cT \mid  d_\cT(p,c) \leq w\}$.
	A NeS model $(\cT,(T_v)_{v\in V(G)})$ of a graph $G$ is a pair of an embedding tree $\cT$ and a collection of neigbhorhood subtrees of $\cT$ associated with each vertex of $G$ such that for every $u,v\in V(G)$, we have $uv\in E(G)$ iff $T_u \cap T_v\neq \emptyset$.
\end{definition}

\begin{theorem}	A graph is a leaf power iff it admits a NeS model.
\end{theorem}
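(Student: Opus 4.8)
The plan is to give a direct, self-contained construction in both directions, resting on the tree analogue of the first part of Fact~\ref{fact:intervals}: in a tree, two neighborhood subtrees (balls) $\{p \mid d_{\cT}(p,c_1)\leq w_1\}$ and $\{p \mid d_{\cT}(p,c_2)\leq w_2\}$ meet iff $d_{\cT}(c_1,c_2)\leq w_1+w_2$. One direction is the triangle inequality; the other uses that the geodesic between $c_1$ and $c_2$ is the \emph{unique} path in $\cT$, so the point at distance $\min(w_1,d_{\cT}(c_1,c_2))$ from $c_1$ along it lies in both balls. This criterion plays exactly the role that Fact~\ref{fact:intervals} played for intervals, and both implications of the theorem reduce to matching it against the leaf-root condition $d_T(u,v)\leq 1$.

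For the easy direction (leaf power $\Rightarrow$ NeS), I would start from a leaf root $(T,\w)$ with $\w\colon E(T)\to[0,1]$, first contract all weight-$0$ edges (this preserves every distance $d_T$, hence the represented graph, and leaves only positive edge weights), and embed the resulting weighted tree in the plane as an embedding tree $\cT$ with segment lengths equal to the weights. For each $v\in V(G)$ I place its center $c_v$ at the point of $\cT$ corresponding to $v$ and set its radius to $\tfrac12$. Then $T_u\cap T_v\neq\emptyset$ iff $d_{\cT}(c_u,c_v)=d_T(u,v)\leq 1$ iff $uv\in E(G)$, so $(\cT,(T_v)_{v\in V(G)})$ is a NeS model.

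The converse (NeS $\Rightarrow$ leaf power) carries the real content, and I would proceed in four steps from a NeS model with centers $c_v$ and radii $w_v$. (i) Replace $\cT$ by the minimal subtree spanning $\{c_v \mid v\in V(G)\}$: pairwise distances between centers are unchanged, and now every leaf of the tree is some center. (ii) Rescale all segment lengths and all radii by a common rational factor $\lambda$ chosen so that $\lambda\,w_v<\tfrac12$ for every $v$; since the meeting criterion is invariant under common scaling, the represented graph is unchanged, and I rename the scaled quantities $d_{\cT},w_v$. (iii) Subdivide $\cT$ so that each $c_v$ is a node, and attach to $c_v$ a new pendant leaf $v$ by an edge of weight $a_v=\tfrac12-w_v\in(0,\tfrac12)$, obtaining a weighted tree $(T,\w)$ whose leaves are exactly $V(G)$. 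For all $u,v$ we have $d_T(u,v)=a_u+d_{\cT}(c_u,c_v)+a_v=1+d_{\cT}(c_u,c_v)-w_u-w_v$, which is $\leq 1$ iff $d_{\cT}(c_u,c_v)\leq w_u+w_v$ iff $uv\in E(G)$. (iv) Finally subdivide each internal edge of weight exceeding $1$ into equal pieces of weight $\leq 1$; this changes no distance, so every weight now lies in $[0,1]$ and $(T,\w)$ is a leaf root of $G$.

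I expect the main obstacle to be the bookkeeping in the converse rather than any single inequality. The embedding tree may contain branch nodes and dangling portions that correspond to no vertex of $G$, so I must argue in step (i) that passing to the subtree spanned by the centers leaves the graph intact while forcing the tree's leaves to coincide with $V(G)$; and in step (ii) I must justify that a single global rescaling suffices to push every radius below $\tfrac12$, so that the radius $w_v$ can be absorbed into a \emph{non-negative} pendant weight $a_v=\tfrac12-w_v$ while the distance threshold is normalised to $1$. These two points mirror the role of the blue/red split in the proof of Theorem~\ref{thm:equivalence}, with the simplification that, intersection being the only relation in a NeS model, one uniform radius-to-pendant transformation replaces the separate containment analysis needed there. (Alternatively, the statement follows by combining \cite[Theorem 4]{BrandstadtHMW10} with \cite[Theorem 3.3]{BibelnieksD93}, but the construction above is self-contained.)
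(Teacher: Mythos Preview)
Your proof is correct and takes a genuinely different route from the paper. The paper's proof is two sentences: it simply chains \cite[Theorem~4]{BrandstadtHMW10} (leaf powers $=$ fixed tolerance NeST graphs) with \cite[Theorem~3.3]{BibelnieksD93} (fixed tolerance NeST graphs $=$ graphs with a NeS model), going through an intermediate graph class rather than constructing anything. You instead give a direct and self-contained argument, resting on the ball-intersection criterion $d_{\cT}(c_1,c_2)\leq w_1+w_2$ as the tree analogue of Fact~\ref{fact:intervals}, and explicitly building the model in each direction (radius-$\tfrac12$ balls at the leaves for one direction; rescale, attach pendants of weight $\tfrac12-w_v$, and subdivide for the other). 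What your approach buys is an explicit polynomial-time translation in both directions and independence from the NeST literature; what the paper's approach buys is brevity. You are aware of this, since your final parenthetical already records the citation-based alternative. Two minor points of bookkeeping: the interval for $a_v$ should be $(0,\tfrac12]$ rather than $(0,\tfrac12)$ (when $w_v=0$), and the degenerate case $|V(G)|=1$ needs a separate one-line treatment since the minimal spanning subtree is then a single point; neither affects the argument.
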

\begin{proof}
	Brandstädt et~al. showed that leaf powers correspond to the graph class fixed tolerance NeST graph~\cite[Theorem 4]{BrandstadtHMW10}. Bidelnieks and Dearing showed that $G$ is a fixed tolerance NeST graph iff $G$ has a NeS model~\cite[Theorem 3.3]{BibelnieksD93}.
\end{proof}

\begin{figure}
	\centering
	\includegraphics[width=0.95\textwidth]{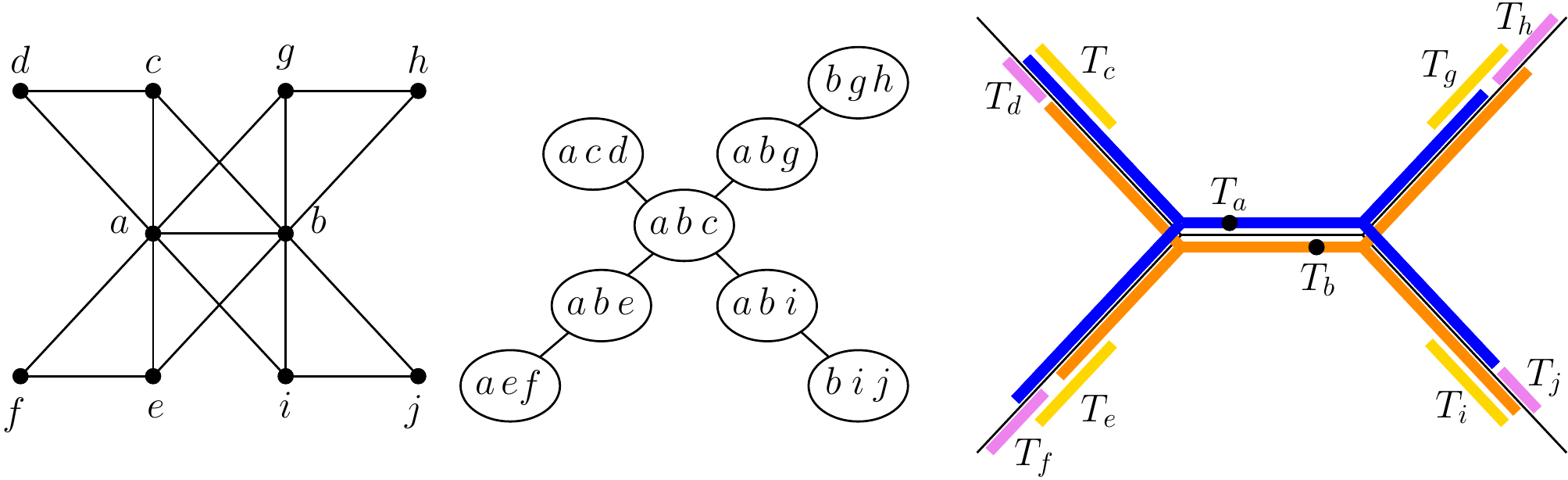}
	\caption{A graph with a clique tree and NeS model. The dots are centers of $T_a$ and $T_b$. 
	}
	\label{fig:CTandNES}
\end{figure}

See 
Figure \ref{fig:CTandNES} for a NeS model. 
Observe that every interval graph has a NeS model where the embedding tree is a single edge.
Moreover, if a graph $G$ admits a NeS model $(\cT,(T_v)_{v\in V(G)})$, then for every embedding path $\cL$ of $\cT$, $(\cL,(T_v\cap \cL)_{v\in X})$ is an interval model of $G[X]$ with $X$ the set of vertices $v$ such that $T_v$ intersects $\cL$.
As illustrated by the proofs of the following two lemmata, some results are easier to prove with NeS models than with other  characterizations.

We use the following operation on NeS models several times in the following proofs:
\begin{itemize}
    \item Given a NeS model $(\cT,(T_v)_{v\in V(G)})$ of a graph $G$ and $x\in V(G)$, if the center $c_x$ of $T_x$ is not an endpoint, we turn it into one by splitting the line containing $c_x$. We add a new line $L$ with endpoint $c_x$ that is sufficiently long so that $L$ contains a point $p$ that is not in any neighborhood subtree.  
    We replace the center of $T_x$ by $p$ and increase the radius of $T_v$ by $d_\cT(p,c_x)$.
\end{itemize}
We call this operation \emph{isolating} $T_x$ in $(\cT,(T_v)_{v\in V(G)})$. Observe that after this operation, $(\cT,(T_v)_{v\in V(G)})$ will remain a NeS model of $G$ since the intersection relations between $T_x$ and the other neighborhood subtrees do not change. Moreover, after this operation, $T_x$ is the only neighborhood subtree in $(T_v)_{v\in V(G)})$ that contains the center of $T_x$.

\begin{restatable}{lemma}{reductionrules}
\label{lem:reductionrules}
	For a graph $G$ and $u \in V(G)$ such that either (1)~$u$ is universal, or (2)~$u$ has degree 1, or (3)~$N(u)$ is a minimal separator in $G-u$, or (4)~$N(u)$ is a maximal clique in $G-u$.
	Then $G$ is a leaf power iff $G-u$ is a leaf power.
\end{restatable}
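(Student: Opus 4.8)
The plan is to work entirely with NeS models and to prove the two directions separately. The forward direction is immediate and uniform across all four cases: if $(\cT,(T_v)_{v\in V(G)})$ is a NeS model of $G$, then discarding $T_u$ yields a NeS model $(\cT,(T_v)_{v\in V(G)\setminus\{u\}})$ of $G-u$, since removing one neighborhood subtree changes none of the remaining pairwise intersections. Hence leaf powers are hereditary and $G-u$ is a leaf power whenever $G$ is.

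The content lies in the backward direction. Starting from a NeS model $(\cT,(T_v)_{v\in V(G-u)})$ of $G-u$, I would add a single neighborhood subtree $T_u$ meeting $T_v$ exactly for $v\in N(u)$. Two tools will be used repeatedly: the Helly property of subtrees of a tree (pairwise intersecting subtrees have a common point), and the fact that a neighborhood subtree of sufficiently small radius centered at a point $p$ intersects exactly those $T_v$ with $p\in T_v$ — a set that is automatically a clique of $G-u$. Three of the cases are then short. For (1), $u$ universal, I set the radius of $T_u$ equal to the diameter of $\cT$, so that $T_u=\cT$ meets every subtree. For (2), $u$ of degree $1$ with sole neighbour $w$, I first isolate $T_w$, after which its center belongs to no other subtree, and then place a tiny $T_u$ at that center so that it meets only $T_w$. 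For (4), $N(u)=K$ a maximal clique, Helly gives a common point $c$ of $\{T_v : v\in K\}$; the vertices whose subtrees contain $c$ form a clique containing $K$, so by maximality it equals $K$, and a tiny $T_u$ at $c$ realises $N(u)=K$.

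The main obstacle is case (3), where $N(u)=S$ is a minimal separator of $G-u$ and the common point supplied by Helly may be shared by subtrees outside $S$, so the placement of $T_u$ must be chosen with care. Here I would exploit the two full components $C_1,C_2$ with $N(C_1)=N(C_2)=S$. Setting $\cT_i=\bigcup_{w\in C_i}T_w$, each $\cT_i$ is a subtree (because $C_i$ is connected) and $\cT_1\cap\cT_2=\emptyset$ (no edge joins $C_1$ and $C_2$). Let $P$ be the unique path in $\cT$ between the closest points $p_1\in\cT_1$ and $p_2\in\cT_2$; its relative interior avoids $\cT_1\cup\cT_2$. Since every $v\in S$ has a neighbour in each $C_i$, the connected set $T_v$ meets both $\cT_1$ and $\cT_2$ and therefore contains all of $P$; thus every interior point of $P$ lies in every subtree of $S$. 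Conversely, for $w\notin S$: if $w\in C_1\cup C_2$ then $T_w$ avoids the interior of $P$, and if $w$ lies in any other component of $G-u-S$ then $T_w$ is disjoint from $\cT_1$ and hence is separated from $p_1$ by a positive distance. Taking the smallest of these finitely many distances, I can pick an interior point $q$ of $P$ close enough to $p_1$ that $\{v : q\in T_v\}=S$ exactly; a tiny $T_u$ centered at $q$ then gives $N(u)=S$.

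The only genuinely delicate step is this last one: proving that a point covered by precisely the separator's subtrees exists. Everything hinges on choosing two full components, which simultaneously forces every $S$-subtree to contain the entire bridging path $P$ and keeps all remaining subtrees away from at least one endpoint of $P$, leaving a private interval in which to anchor $T_u$.
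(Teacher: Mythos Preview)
Your proof is correct. The forward direction and cases (1), (2), (4) are essentially identical to the paper's: both drop $T_u$ for heredity, both take a huge-radius subtree for the universal vertex, both isolate the neighbour's subtree for the pendant vertex, and both use the Helly property plus maximality of the clique to find a private point for case~(4).

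The genuine difference is case (3). The paper works with the intersection $T=\bigcap_{v\in N(u)}T_v$, invokes the Bibelnieks--Dearing lemma to conclude that $T$ is itself a neighborhood subtree with a center $c_T$, then \emph{modifies} $\cT$ by attaching a new line at $c_T$ and places $T_u$ at the far end of $T$ on that line; the argument that no extra subtree covers this point rests on the (somewhat terse) claim that only the subtrees of $N(u)$ contain all of $T$. You instead exploit the combinatorics of the minimal separator directly: picking two full components $C_1,C_2$, you locate the bridging path $P$ between $\bigcup_{w\in C_1}T_w$ and $\bigcup_{w\in C_2}T_w$, observe that every $T_v$ with $v\in S$ must swallow $P$ while every other subtree misses a neighbourhood of $p_1$, and drop $T_u$ at an interior point of $P$ near $p_1$. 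Your route is more self-contained --- no external lemma on intersections of neighborhood subtrees, no alteration of the embedding tree --- and it makes explicit exactly where minimality (the existence of two full components) is used. The paper's route is shorter to state and stays thematically closer to the ``common intersection'' idea used in case~(4), at the cost of citing an outside result and leaving the key exclusion step less transparent.
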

\begin{proof}
    The forward implication is trivial as being a leaf power is a hereditary property.
	For the backward implication, we assume we have a NeS model $(\cT,(T_v)_{v\in V(G-u)})$ of $G - u$, and in each case we show how to obtain a NeS model for $G$:
	\begin{itemize}
	    \item $u$ is universal, we simply create $T_u$ a neighborhood subtree with an arbitrary center and a radius sufficiently large so that $T_u$ intersects all the other neighborhood subtrees.
		\item $u$ has degree $1$ with $x$ as neighbor. We isolate $T_x$ in $(\cT,(T_v)_{v\in V(G-u)})$.  Then, we set $T_u=\{c_x\}$ where $c_x$ is the (new) center of $T_x$.
        \item $N(u)$ is a maximal clique in $G - u$. Then $T=\cap_{v\in N(u)}T_v$ is not empty and the only neighborhood subtrees intersecting $T$ are those associated with the vertices in $N(u)$.
        We set $T_u=\{c_u\}$ for some arbitrary point $c_u\in T$. 
        \item $N(u)$ is a minimal separator in $G - u$. Let $T=\cap_{v\in N(u)}T_v$. Since $T$ is the intersection of neighborhood subtrees, by \cite[Lemma 2.1]{BibelnieksD93}, $T$ is a neighborhood subtree and has a center $c_T$. We add a line $L$ to $\cT$ with endpoint $c_T$. 
        Let $c_u$ be the point on $L\cap T$ such that $d_\cT(c_T,c_u)$ is maximum.
        As $N(u)$ is a minimal separator, the only neighborhood subtrees containing $T$ are those associated with the vertices in $N(u)$. 
        By construction, we deduce that the only neighborhood subtrees containing $c_u$ are those associated with the vertices in $N(u)$. We set $T_u=\{c_u\}$.
	\end{itemize}
	It is not hard to see that in each case, $(\cT,(T_v)_{v\in V(G)})$ is a NeS model of $G$.
\end{proof}

\begin{lemma}\label{lem:cutvertex}
	Let $G$ be a graph and $u \in V(G$ a cut vertex. Then $G$ is a leaf power iff for every component $C$ of $G-u$, $G[V(C)\cup \{u\}]$ is a leaf power.
\end{lemma}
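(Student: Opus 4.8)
The plan is to prove both directions. The forward direction is immediate: since $G[V(C)\cup\{u\}]$ is an induced subgraph of $G$ for each component $C$ of $G-u$, and being a leaf power is hereditary, if $G$ is a leaf power then so is each $G[V(C)\cup\{u\}]$. The real work is the backward direction, where I assume that for every component $C$ of $G-u$ the graph $G_C := G[V(C)\cup\{u\}]$ admits a NeS model, and I must glue these models into a single NeS model of $G$. The whole graph $G$ is obtained by identifying the copies of $u$ across all the $G_C$'s at the cut vertex, so the embedding trees should be glued at the neighborhood subtree of $u$.

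First I would take, for each component $C$, a NeS model $(\cT_C,(T^C_v)_{v\in V(G_C)})$ of $G_C$. The key preprocessing step is to normalize the subtree $T^C_u$ representing $u$ in each model: I would \emph{isolate} $T^C_u$ using the isolating operation introduced just before \cref{lem:reductionrules}, so that the center $c^C_u$ of $T^C_u$ becomes an endpoint (a leaf node) of $\cT_C$ and is the unique center lying in $T^C_u$. After isolating, I would further shrink $T^C_u$ so that its center $c^C_u$ is not contained in any other neighborhood subtree of the model; this is possible because $u$ is a cut vertex and the relevant adjacencies are preserved as long as the intersection pattern with each $T^C_v$ is unchanged. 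Concretely, because $c^C_u$ can be pushed out along the new isolating line to a point avoided by all other subtrees, I can arrange that $T^C_u$ meets exactly the subtrees of $N_{G_C}(u)=N_G(u)\cap V(C)$ and that its center sits at a degree-one node of $\cT_C$.

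The gluing step is then to form a single embedding tree $\cT$ by taking the disjoint union of all the $\cT_C$ and identifying the endpoints $c^C_u$ (over all components $C$) into one common node $c_u$; the resulting $\cT$ is still a tree since each $c^C_u$ is a leaf of $\cT_C$. I then define $T_v := T^C_v$ for $v\in V(C)$, and define $T_u$ to be the neighborhood subtree centered at the identified node $c_u$ with radius equal to the (common, after rescaling) radius of the $T^C_u$. Because each $T^C_u$ had its center at the glued node, all the copies coincide into a single well-defined subtree $T_u$, and its reach into each branch $\cT_C$ is exactly what $T^C_u$ reached. The final step is to verify the adjacency condition: for $v,w$ in the same component $C$, the intersection $T_v\cap T_w$ is unaffected since that branch of $\cT$ is an isometric copy of $\cT_C$; for $v\in V(C)$ and $w\in V(C')$ with $C\neq C'$, their subtrees lie in branches meeting only at $c_u$, and since $c_u$ lies in neither $T_v$ nor $T_w$ (by the shrinking we performed on everything except $u$), we get $T_v\cap T_w=\emptyset$, matching the non-adjacency of $v$ and $w$ in $G$; finally $T_u\cap T_v\neq\emptyset$ iff $v\in N_G(u)$ by construction.

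The main obstacle I anticipate is making the radii of the various $T^C_u$ compatible so that they genuinely fuse into one subtree $T_u$ at the shared node $c_u$. Each model $\cT_C$ is an independent embedding with its own scale, so I would first rescale each $(\cT_C,(T^C_v))$ by a positive rational factor — rescaling all distances and radii uniformly preserves the NeS model — to make every $T^C_u$ have the same radius $r$. Care is also needed that after isolating and shrinking $T^C_u$, the point $c^C_u$ really is a leaf avoided by all other subtrees; this follows from the isolating construction, which pushes the center onto a fresh line that contains a point in no other neighborhood subtree. Once the scales are aligned and the centers are clean leaves, the identification is straightforward and the adjacency verification reduces to the three routine cases above.
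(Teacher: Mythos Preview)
Your proposal is correct and follows essentially the same approach as the paper: isolate $T_u$ in each component's NeS model so its center becomes a leaf contained in no other subtree, rescale so the radii of the copies of $T_u$ agree, and glue the embedding trees by identifying these centers. The paper's write-up is terser (it normalizes every radius of $T_u$ to $1$ and leaves the adjacency check as ``straightforward''), and note that the isolating operation already guarantees the new center lies in no other subtree, so your separate ``shrink'' step is redundant---but the argument is the same.
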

\begin{proof}
    As with Lemma \ref{lem:reductionrules}, the forward implication is trivial. For the backward implication, let $C_1,\dots,C_k$ be the components of $G- u$, and for every $1\leq i\leq k$ we let $C^+_i = G[V(C_i)\cup \{u\}]$. By assumption, every $C^+_i$ is a leaf power and there exists a NeS model $(\cT^i,(T^i_v)_{v\in V(C^+_i)})$ of $C^+_i$. 
    For each $1\leq i \leq k$, we do the following, we isolate $T_u^i$ in $(\cT_i,(T^i_v)_{v\in V(C^+_i)})$ and we make the new center $c_u^i$ of $T_u^i$ an endpoint by subdividing the new line containing $c_u^i$. Then, we multiply the length of every line of $\cT_i$ and the radius of every $(T^i_v)_{v\in V(C^+_i)}$ by $\frac{1}{r_u^i}$ where $r_u^i$ is the new radius of $T_u^i$.
    After these operations, for every $1\leq i \leq k$, the radius of $T_u^i$ is 1 and $T_u^i$ is the only neighborhood subtree in $(T_v)_{v\in V(C_i^+)})$ that contains the center of $T_u^i$.

    We make a new NeS model $(\cT,(T_v)_{v\in V(G)})$ where $\cT$ is obtained from the union of $\cT_1,\dots,\cT_k$ by identifying the centers (which are endpoints) of the neighborhood subtrees $T_u^1,\dots,T_u^k$ as a unique point $c_u$
    We define $T_u$ the neighborhood subtree with center $c_u$ and radius 1.
    Finally, for every $v\in V(G-u)$, we define $T_v=T_v^i$ with $i\in [k]$ such that $v\in V(C_i)$.
    It is straightforward to check that $(\cT,(T_v)_{v\in V(G)})$ is a NeS model of $G$.
\end{proof}




We now give the algorithm for recognizing graphs having a star NeS model. 
Our result is based on the purely combinatorial definition of \textit{good partition}, and we show that a graph admits a star NeS model iff it admits a good partition. Given a good partition, we compute a star NeS model in polynomial time. Finally, we prove that our Algorithm~\ref{algo:goodpartition} in polynomial time constructs a good partition of the input graph or confirms that it does not admit one.

Consider a star NeS model $(\cT,(T_v)_{v\in V(G)})$ of a graph $G$.
Observe that $\cT$ is the union of line segments $L_1,\dots,L_\beta$ with a common endpoint $c$ that is the center of $\cT$. Let $X$ be the set of vertices whose neighborhood subtrees contain $c$. For each $i\in[\beta]$, we let $B_i$ be the set of all vertices in $V(G)\setminus X$ whose neighborhood subtrees are subsets of $L_i$. The family $\cB = \{B_i \mid i\in[\beta]\}$ must then constitute a partition of $V(G)\setminus X$. We will show in Theorem~\ref{thm:goodpartition} that the pair $(X,\cB)$ has the properties of a good partition.

\begin{fact}\label{fact:structural}
Let $G$, $(\cT,(T_v)_{v\in V(G)})$ and $\cB$ be as defined above. We then have: 
\begin{itemize}
\item There is no edge between $B_i$ and $B_j$ for $i\neq j$ and thus $\cc(G-X) \sqsubseteq \cB$. 
\item For every $i \in[\beta]$ the NeS model $(L_i,(T_v\cap L_i)_{v\in B_i\cup X})$ is an interval model of $G[X \cup B_i]$.
\item For each $x\in X$ the neighborhood subtree $T_x$ 
is the union of the $\beta$ intervals $L_1\cap T_x,\dots,L_\beta\cap T_x$ and 
there exist positive rationals $\ell_x$ and $h_x$ with $\ell_x\leq h_x$ such that one interval among these intervals has length $h_x$ and the other $\beta -1$ intervals have length $\ell_x$.
\item If $\ell_x=h_x$, then the center of $T_x$ is $c$.
\end{itemize}
\end{fact}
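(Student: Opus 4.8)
The plan is to read off all four items from the geometry of the star, using the additivity of the tree metric at the center $c$. As a harmless normalization I would first extend each segment $L_i$ so that no neighborhood subtree is truncated by its far endpoint, and (by increasing a radius $w_x$ by an arbitrarily small positive rational when needed, which changes no adjacency since two disjoint subtrees are at positive distance) I would assume that the star center $c$ lies strictly in the interior of $T_x$ for every $x\in X$; this is what guarantees the positivity claimed in the third item. Throughout I parametrize each $L_i$ by distance to $c$, so that a point of $L_i$ at parameter $t$ is at distance $t$ from $c$, and I write $c_v,w_v$ for the center and radius of $T_v$.

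For the first item, take $v\in B_i$ and $v'\in B_j$ with $i\neq j$. Then $T_v\subseteq L_i$ and $T_{v'}\subseteq L_j$, while $L_i\cap L_j=\{c\}$; since $v,v'\notin X$ neither subtree contains $c$, so $T_v\cap T_{v'}=\emptyset$ and $vv'\notin E(G)$. Hence every edge of $G-X$ stays inside a single block $B_i$, so each connected component of $G-X$ is contained in one block and $\cc(G-X)\sqsubseteq\cB$.

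For the second item I determine exactly which subtrees meet $L_i$. Every $v\in B_i$ has $\emptyset\neq T_v\subseteq L_i$, and every $x\in X$ has $c\in T_x\cap L_i$; conversely, for $v\in B_j$ with $j\neq i$ we have $T_v\subseteq L_j$ and $T_v\not\ni c$, so $T_v\cap L_i=\emptyset$. Thus $B_i\cup X$ is precisely the set of vertices whose subtree meets the embedding path $L_i$, and the observation recorded after Definition~\ref{def:NeS}, that restricting a NeS model to an embedding path yields an interval model, applies verbatim to give that $(L_i,(T_v\cap L_i)_{v\in B_i\cup X})$ is an interval model of $G[X\cup B_i]$.

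For the third and fourth items, fix $x\in X$ and set $\delta=d_\cT(c,c_x)$; since $c\in T_x$ we have $0\leq\delta<w_x$ after the normalization, and $c_x$ lies on some segment $L_j$ (any segment when $\delta=0$). Because $\cT=\bigcup_i L_i$ we have $T_x=\bigcup_i(T_x\cap L_i)$. For $i\neq j$, every $p\in L_i$ satisfies $d_\cT(p,c_x)=d_\cT(p,c)+\delta$, since the $p$–$c_x$ path crosses $c$; hence $T_x\cap L_i$ is the interval of $L_i$ anchored at $c$ of length $w_x-\delta$. For $i=j$, the ball around $c_x$ reaches $c$ (as $\delta<w_x$) and extends a further $w_x$ beyond $c_x$, so $T_x\cap L_j$ has length $\delta+w_x$. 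Setting $\ell_x=w_x-\delta$ and $h_x=w_x+\delta$ gives one interval of length $h_x$ and $\beta-1$ intervals of length $\ell_x$, with $0<\ell_x\leq h_x$; when $\delta=0$ all $\beta$ intervals share the length $w_x$, consistent with the statement. Finally $\ell_x=h_x$ forces $\delta=0$, i.e. $c_x=c$, which is the fourth item. The one step requiring genuine care is this length computation together with the positivity normalization; everything else is immediate from the fact that in a star every cross-segment path passes through the center.
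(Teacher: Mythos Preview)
Your argument is correct. The paper states this as a \emph{Fact} with no proof at all; it is treated as an immediate observation from the geometry of a star embedding, so there is nothing to compare against. Your write-up simply makes explicit what the paper leaves to the reader: the key identity $d_\cT(p,c_x)=d_\cT(p,c)+\delta$ for $p$ on a segment not containing $c_x$, which yields $\ell_x=w_x-\delta$ and $h_x=w_x+\delta$.

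One small remark: the normalization you perform (slightly enlarging $w_x$ so that $c$ lies in the interior of $T_x$, and lengthening the segments so no subtree is clipped) is not in the paper, but it is exactly what is needed to make the word ``positive'' in the third bullet literally true; without it one could have $\ell_x=0$ when $c$ sits on the boundary of $T_x$. Your justification that a small enough perturbation creates no new intersections, because disjoint neighborhood subtrees are compact and hence at positive distance, is sound. This is a case where your version is actually more careful than the paper's.
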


\begin{claim}\label{claim:maximalclique}
	If $G$ has a star NeS model, it has a one, $(\cT,(T_v)_{v\in V(G)})$, where vertices whose neighborhood subtrees contain  the center of $\cT$ is a maximal clique.
\end{claim}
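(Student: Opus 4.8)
The plan is to argue by an extremal choice of model. Among all star NeS models of $G$ I would fix one $(\cT,(T_v)_{v\in V(G)})$ that \emph{maximizes} the size $|X|$ of the center set $X=\{v : c\in T_v\}$, where $c$ is the center of the star. Since any two subtrees in $X$ share the point $c$, the set $X$ is always a clique; the goal is to show that this extremal $X$ is in fact a maximal clique, for otherwise I will produce a star NeS model with a strictly larger center set, contradicting the choice.

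So suppose $X$ is not maximal. Then there is a vertex $w\notin X$ adjacent to every vertex of $X$. Since $c\notin T_w$ and $T_w$ is connected, $T_w$ lies inside a single branch $L_i$ and is an interval $[l_w,r_w]$ with $l_w>0$ in the coordinates of $L_i$ (placing $c$ at position $0$). By the second and third bullets of \cref{fact:structural}, each $x\in X$ meets $L_i$ in an interval of the form $[0,a_x]$, so $w$ being adjacent to all of $X$ is equivalent to $l_w\le a_x$ for every $x\in X$, i.e.\ $l_w\le A_i:=\min_{x\in X}a_x$. Among all candidate vertices (those outside $X$ adjacent to all of $X$) I would pick $w$ whose left endpoint $l_w$ is \emph{smallest}.

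The transformation is then to enlarge $T_w$ minimally so that it just reaches $c$: replace $T_w$ by the neighborhood subtree centered at the point of $L_i$ at position $r_w/2$ with radius $r_w/2$, whose trace on $L_i$ is $[0,r_w]\supseteq[l_w,r_w]$ and which reaches $c$ at distance exactly equal to its radius. The key point is that this enlargement meets the other branches only in the single point $c$, so the only subtrees off $L_i$ it can now intersect are those already containing $c$, namely the vertices of $X$ — and $w$ is adjacent to all of $X$ already. Within $L_i$, growing from $[l_w,r_w]$ to $[0,r_w]$ only risks a spurious edge to a vertex $u\in B_i$ whose interval lies entirely in $[0,l_w)$; but such a $u$ would satisfy $l_u\le r_u<l_w\le A_i$, making it adjacent to all of $X$ and hence itself a candidate with strictly smaller left endpoint than $w$, contradicting the minimal choice of $w$. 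Thus no adjacency changes except the already-present edges between $w$ and $X$, so the modified model is still a NeS model of $G$, now with center set $X\cup\{w\}$, contradicting maximality of $|X|$. Therefore the extremal $X$ is a maximal clique, which is the claim.

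The main obstacle is controlling the new intersections created when $T_w$ is grown to the center. Spillage into the \emph{other} branches is neutralized by making the enlargement just touch $c$, so that it extends zero distance beyond the center point; and the within-branch ``shadowing'' of vertices lying between $c$ and $T_w$ is neutralized by the minimal-left-endpoint choice of $w$ together with the observation that any such shadowed vertex would itself be adjacent to all of $X$. Handling these two effects cleanly is where the care is needed; the rest is bookkeeping on the interval traces guaranteed by \cref{fact:structural}.
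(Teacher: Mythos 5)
Your proof is correct, and it rests on the same geometric insight as the paper's: a vertex adjacent to all of $X$ but outside it has its subtree lying as an interval on a single branch, such intervals can be stretched until they touch the center $c$, and an extremal ``closest to $c$'' choice guarantees that the stretched region hides nothing that would create a spurious edge. The argument skeleton differs, though. The paper performs a one-shot modification of the given model: it picks a maximal clique $Y\supset X$ whose common intersection $\cap_{y\in Y}T_y$ is closest to $c$, stretches \emph{all} subtrees of $Y\setminus X$ simultaneously, and concludes that the center set becomes exactly $Y$; the minimality of $d_\cT(c,\cap_{y\in Y}T_y)$ over maximal cliques is what rules out foreign subtrees in the gap. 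You instead take a model maximizing the size of the center set $X$ and derive a contradiction by promoting a \emph{single} candidate vertex $w$ -- the one whose interval is closest to $c$ among all vertices adjacent to everything in $X$ -- showing that any vertex shadowed by the enlarged $T_w$ (an interval inside $[0,l_w)$) would itself be adjacent to all of $X$ and hence a candidate with strictly smaller left endpoint. Your variant buys a simpler spurious-edge analysis, since only one interval grows and the off-branch case is trivial (the enlargement meets the other branches only in the point $c$), at the cost of an outer extremal argument over models; it reaches a maximal-clique center only implicitly, whereas the paper's choice of $Y$ gets there in one modification. Both routes are complete; your minimal-$l_w$ argument plays precisely the role that the minimal-distance choice of $Y$ plays in the paper's proof of Claim~\ref{claim:maximalclique}.
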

\begin{claimproof}
	Let $(\cT,(T_v)_{v\in V(G)})$ be the star NeS model of a graph $G$ with $c$ the center of $\cT$.
	Suppose that the set $X$ of vertices $x$ such that $c\in T_x$ is not a maximal clique.
	We deduce that there exists at least one maximal clique $Y$ containing $X$.
	For every such clique $Y$, $\cap_{y\in Y}T_y$ does not contain $c$ since $X\subset Y$.
	We take a maximal clique $Y$ such that $X\subset Y$ and $\min\{d_T(c,p)\mid p \in \cap_{y\in Y}T_y\}$ is minimum.
	Let $L$ be the line segment of $\cT$ containing $\cap_{y\in Y}T_y$.
	By Fact~\ref{fact:structural}, the neighborhood subtrees of the vertices in $Y\setminus X$ are intervals of $L$.
	We modify the NeS model by stretching the neighborhood subtrees of the vertices in $Y\setminus X$ so that they admit $c$ as an endpoint.
	After this operation, these subtrees remain intervals of $L$ and consequently, they still are neighborhood subtrees.
	Moreover, the choice of $Y$ implies that the only neighborhood subtrees that were intersecting the interval between $c$ and $\cap_{y\in Y}T_y$ are those associated with the vertices of $Y$.
	Thus, after this operation, we obtain a star NeS model where the set of vertices $x$ such that $c\in T_x$ is now the maximal clique $Y$.
\end{claimproof}


Claim \ref{claim:maximalclique} follows since we can always stretch some intervals to make $X$ a maximal clique. So far we have described a good partition as it arises from a star NeS model. Now we introduce the properties of a good partition that will allow to abstract away from geometrical aspects while still being equivalent, i.e. so that a graph has a good partition $(X, \cB)$ iff it has a star NeS model.
The first property is $\cc(G-X)\sqsubseteq \cB$ and the second is that for every $B \in \cB$ the graph $G[X \cup B]$ is an interval graph having a model where the intervals of $X$ contain the last point used in the interval representation.

\begin{definition}[$X$-interval graph] \label{def:Xinterval}
	Let $X$ be a maximal clique of $G$.
	We say that $G$ is an $X$-interval graph if $G$ admits a clique path $(K_1,\dots,K_k,X)$.
\end{definition}

The third property is the existence of an elimination order for the vertices of $X$ based on the lengths $\ell_x$ in the last item of Fact~\ref{fact:structural}, namely the permutation $(x_1,\dots,x_t)$ of $X$ such that $\ell_{x_1}\leq \ell_{x_2}\leq \dots \leq \ell_{x_t}$.
This permutation has the property that for any $i\in [t]$, among the vertices $x_i,x_{i+1},...,x_t$ the vertex $x_i$ must have the minimal neighborhood in at least $\beta -1$ of the blocks of $\cB$; we say that $x_i$ is \emph{removable} from $\{x_i,\dots,x_t\}$ for $\cB$.





\begin{definition}[Removable vertex] \label{def:removable}
	Let $X\subseteq V(G)$, $Y\subseteq X$ and let $\cB$ be a partition of $V(G)\setminus X$.
	Given a block $B$ of $\cB$ and $x\in Y$, we say $N(x)$ is minimal in $B$ for $Y$ if $N(x)\cap B \subseteq N(y)$ for every $y\in Y$.
	We say that a vertex $x\in Y$ is \textit{removable} from $Y$ for $\cB$ if $N(x)$ is minimal in at least $\abs{\cB}-1$ blocks of $\cB$ for $Y$.
\end{definition}



\begin{definition}[Good partition]
	\label{def:goodpartition}
	A good partition of a graph $G$ is a pair $(X,\cB)$ where $X$ is a maximal clique of $G$ and $\cB$ a partition of $V(G)\setminus X$ satisfying:
	\begin{enumerate}
		\item\label{item:sqsubset} $\cc(G-X)\sqsubseteq \cB$, i.e. every $C\in \cc(G-X)$ is contained in a block of $\cB$.
		\item\label{item:Xinterval} For each block $B\in \cB$, $G[X\cup B]$ is an $X$-interval graph.
		\item\label{item:elimination:order} There exists an elimination order $(x_1,\dots,x_t)$ on $X$ such that for every $i\in [t]$, $x_i$ is removable from $\{x_i,\dots,x_t\}$ for $\cB$.
	\end{enumerate}
	$X$ is the central clique of $(X,\cB)$ and $(x_1,\dots,x_t)$ a good permutation of  $(X,\cB)$.
\end{definition}


\begin{theorem}\label{thm:goodpartition}
	A graph $G$ admits a good partition iff it admits a star NeS model. Moreover, given the former we can compute the latter in polynomial time.
\end{theorem}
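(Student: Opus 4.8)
The plan is to prove the two directions separately. The implication that a star NeS model yields a good partition is essentially packaged in \cref{fact:structural} and \cref{claim:maximalclique}, so I would dispatch it first; the reverse implication carries the constructive, polynomial-time content and is the heart of the theorem.

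For the direction from a star NeS model to a good partition, I would start from a model and apply \cref{claim:maximalclique} to assume that the set $X$ of vertices whose neighborhood subtrees contain the center $c$ is a maximal clique of $G$. Taking $\cB=\{B_1,\dots,B_\beta\}$ as in the setup preceding \cref{fact:structural}, its first bullet gives Property~1 of \cref{def:goodpartition}, namely $\cc(G-X)\sqsubseteq\cB$. For Property~2, the second bullet gives an interval model of $G[X\cup B_i]$ on the line $L_i$ in which every interval of $X$ contains the endpoint $c$ (as $c\in T_x$ for $x\in X$); since $X$ is a maximal clique of $G$ no vertex of $B_i$ is adjacent to all of $X$, so $X$ remains a maximal clique of $G[X\cup B_i]$, is exactly the clique of intervals covering $c$, and hence is an end clique of a clique path, certifying that $G[X\cup B_i]$ is an $X$-interval graph (\cref{def:Xinterval}). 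For Property~3, I would order $X$ as $(x_1,\dots,x_t)$ with $\ell_{x_1}\leq\dots\leq\ell_{x_t}$ using the lengths of the third bullet of \cref{fact:structural}; in each of the $\beta-1$ lines $L_j$ where $T_{x_i}\cap L_j$ has the short length $\ell_{x_i}$ this interval is $[0,\ell_{x_i}]$, while every $y$ with $\ell_y\geq\ell_{x_i}$ has $T_y\cap L_j\supseteq[0,\ell_{x_i}]$, so any $b\in B_j$ meeting $T_{x_i}$ also meets $T_y$. Thus $N(x_i)$ is minimal in at least $\beta-1$ blocks for $\{x_i,\dots,x_t\}$, i.e.\ $x_i$ is removable (\cref{def:removable}), which is exactly the required elimination order.

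For the converse, given a good partition $(X,\cB)$ with $\cB=\{B_1,\dots,B_\beta\}$ and good permutation $(x_1,\dots,x_t)$, I would build a star with one line $L_j$ per block. Using Property~2 I first fix, for each $j$, an interval model of $G[X\cup B_j]$ in which all intervals of $X$ start at the common endpoint $c$; here $x$ meets $b\in B_j$ iff the left endpoint $p_b>0$ of $b$ is at most the length $a_{x,j}$ of $x$'s interval. A crucial preliminary observation is that within one block the sets $\{N(x)\cap B_j \mid x\in X\}$ form a chain under inclusion, since each equals $\{b \mid p_b\leq a_{x,j}\}$; moreover, unwinding \cref{def:removable}, whenever $i<i'$ and $x_i$ is minimal in $B_j$ one gets $N(x_i)\cap B_j\subseteq N(x_{i'})\cap B_j$, so on the short blocks of each vertex the chain order agrees with the global index order. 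I would then set $\ell_{x_i}:=i$, declare for each $x_i$ a long line $\lambda(i)$ outside its $\geq\beta-1$ minimal blocks (arbitrary if all are minimal), pin $a_{x_i,j}:=\ell_{x_i}$ on every short block $j$, and choose $a_{x_i,\lambda(i)}=:h_{x_i}$ together with left-endpoints $p_b$ so as to realise exactly the chain of block $\lambda(i)$. Finally I would convert each $x$ into a genuine neighborhood subtree: with short length $\ell_x$ and long length $h_x\geq\ell_x$, place its center at distance $(h_x-\ell_x)/2$ from $c$ along $L_{\lambda(x)}$ with radius $(h_x+\ell_x)/2$, which yields one interval of length $h_x$ on $L_{\lambda(x)}$ and $\beta-1$ intervals of length $\ell_x$ on the other lines, exactly as in \cref{fact:structural}. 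Adjacencies inside a common block are inherited from its interval model, edges of the form $xb$ are controlled by the thresholds, and there are no edges across distinct blocks by Property~1, so the result is a star NeS model of $G$.

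The step I expect to be the main obstacle is showing that the thresholds can be chosen consistently: pinning $a_{x_i,j}=i$ on short blocks must never conflict with the chain constraint in a long block while still allowing $h_{x_i}\geq\ell_{x_i}$. The elimination order is precisely what rules this out: if $x_{i'}$ is short in block $j=\lambda(i)$ and $i'<i$, then removability forces $N(x_{i'})\cap B_j\subseteq N(x_i)\cap B_j$, so the chain never demands $h_{x_i}<i$ from below, while any short vertex whose block-$j$ neighborhood strictly contains that of $x_i$ must have index $>i$, leaving a nonempty interval of admissible values for $h_{x_i}$. I would make this monotonicity argument precise and then check that the whole construction --- computing per-block clique paths and interval models, verifying removability, and assigning rational coordinates --- runs in polynomial time, which is immediate since every ingredient is polynomial in $n$ and $\beta\leq n$.
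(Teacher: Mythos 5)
Your forward direction (star NeS model $\Rightarrow$ good partition) is correct and coincides with the paper's own argument: Claim~\ref{claim:maximalclique} makes $X$ a maximal clique, Fact~\ref{fact:structural} yields Properties~\ref{item:sqsubset} and~\ref{item:Xinterval}, and sorting $X$ by the short lengths $\ell_x$ gives Property~\ref{item:elimination:order}. Your reverse direction also follows the paper's blueprint (one line per block, short lengths pinned to the elimination index, long lengths chosen compatibly with the inclusion chains of neighborhoods in each block, block vertices placed last), but the feasibility argument for the long lengths has a genuine gap. You bound each $h_{x_i}$ only against the \emph{short} vertices of the block $\lambda(i)$ and conclude that each $h_{x_i}$ separately has a nonempty admissible interval. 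However, several vertices of $X$ can share the same long block, and their long lengths must then also be mutually ordered by the chain: if $\lambda(i)=\lambda(i')=j$ and $N(x_{i'})\cap B_j\subsetneq N(x_i)\cap B_j$, you need $h_{x_{i'}}<h_{x_i}$, since otherwise a vertex $b\in (N(x_i)\setminus N(x_{i'}))\cap B_j$ has no placement (its left endpoint would have to be both $>h_{x_{i'}}$ and $\leq h_{x_i}$). Choosing each length independently inside its interval can violate this. Concretely, take $X=\{x_1,x_2,x_3\}$ a triangle, $B_1=\{b,b'\}$ with $bb'\in E(G)$, $N(b)\cap X=\{x_1,x_2\}$, $N(b')\cap X=\{x_1\}$, and $B_2=\{d\}$ with $N(d)\cap X=\{x_2,x_3\}$. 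Then $(x_1,x_2,x_3)$ is a good permutation with $\lambda(1)=\lambda(2)=1$ and $x_3$ short on $L_1$ with pinned length $3$; your per-vertex intervals for $h_{x_1}$ and $h_{x_2}$ are both $(3,\infty)$, so nothing prevents $h_{x_1}=4$, $h_{x_2}=5$, and then $b'$ cannot be placed, even though the graph does admit a star NeS model (e.g.\ $h_{x_2}=7/2$, $h_{x_1}=4$).

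This mutual-ordering problem is exactly where the paper invests its effort: the lengths $\ell(x_i,L_{f(i)})$ are assigned by a downward recursion along the good permutation, using in the non-extremal case the sandwich value $(\ell(x^{\min},L_{f(i)})+\ell(x^{\max},L_{f(i)}))/2$ between already-assigned lengths, and Claim~\ref{claim:conditions} proves Condition~(2) --- that $N(x)\cap B_j\subsetneq N(y)$ implies $\ell(x,L_j)<\ell(y,L_j)$ --- for \emph{all} pairs $x,y\in X$, in particular for two long vertices of the same block. Your approach is salvageable: the observation you already use for short/long conflicts (removability forces the chain order to agree with the index order) extends to show that, within one block, every index generating a lower bound is strictly smaller than every index generating an upper bound, so the long lengths of a block can be chosen \emph{jointly}, monotonically along the chain of their neighborhoods. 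But as written, the per-vertex interval argument does not suffice, and this joint assignment (or the paper's recursion) is the missing step.
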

\begin{proof}
	($\Leftarrow$) Let $G$ be a graph with a star NeS model $(\cT,(T_v)_{v\in V(G)})$.
	Let $X$ be the set of vertices $x$ such that $T_x$ contains the center $c$ of the star $\cT$.
	By Claim~\ref{claim:maximalclique}, we can assume that $X$ is a maximal clique.
	As $\cT$ is a star, $\cT$ is an union of line segments $L_1,\dots,L_\beta$ with one common endpoint that is $c$.
	Let $\cB=\{B_1,\dots,B_\beta\}$ such that, for every $i\in [\beta]$, $B_i$ is the set of vertices $v\in V(G)\setminus X$ such that $T_v$ intersects $L_i$.
	We claim that $(X,\cB)$ is a good partition of $G$.
	Fact~\ref{fact:structural} implies that Property~\ref{item:sqsubset} and Property~\ref{item:Xinterval} are satisfied.
	For every $x\in X$, let $\ell_x$ be the rational defined in Fact~\ref{fact:structural}.
	Take a permutation $(x_1,\dots,x_t)$ of $X$ such that $\ell_{x_1}\leq \ell_{x_2}\leq \dots \leq \ell_{x_t}$.	Let $f(i)\in [\beta]$ such that the center of $x_i$ lies in $L_{f(i)}$. 
	From Fact~\ref{fact:structural}, we have $\abs{T_{x_i}\cap L_j}=\ell_{x_i}$ for every $j\neq f(i)$. Hence, for every $j\neq f(i)$ the interval $T_{x_i}\cap L_j$ are contained in the neighborhood subtree $T_{y}$ for every $y\in \{x_i,\dots,x_t\}$. 
	Consequently, $N(x_i)$ is minimal in $B_j$ for $\{x_i,\dots,x_t\}$ for every $j\neq f(i)$.
	We conclude that $(x_1,\dots,x_t)$ is a good permutation of $(X,\cB)$, i.e. Property~\ref{item:elimination:order} is satisfied.

    ($\Rightarrow$) Let $(X,\cB)$ be a good partition of a graph $G$ with $\cB=\{B_1,\dots,B_\beta\}$ and $(x_1,\dots,x_t)$ be a good permutation of $X$.
	For every $i\in [t]$, we define $X_{i}=\{x_i,\dots,x_t\}$. Since $x_i$ is removable from $X_i$, there exists an integer $f(i)\in[\beta]$ such that $N(x_i)$ is minimal for $X_i$ in every block of $\cB$ different from $B_{f(i)}$ (note that $f(i)$ is not necessarily unique, as $x_i$ could be minimal in every block of $\cB$ for $X_i$).
	
	Take $\cT$, the embedding of a star with center $c$ that is the union of $\beta$ line segments $L_1,\dots,L_\beta$ of length $2t+1$ whose intersection is $\{c\}$.
	We start by constructing the neighborhood subtree of the vertices in $X$.
	For doing so, we associate each $x_i\in X$ and each segment $L_j$ with a rational $\ell(x_i,L_j)$ and define $T_{x_i}$ as the union of over $j\in [\beta]$ of the points on $L_j$ at distance at most $\ell(x_i,L_j)$ from $c$ .
	
	For every $i\in [t]$ and $j\in [\beta]$ such that $j\neq f(i)$, we define $\ell(x_i,L_j)=i$.
	We define $\ell(x_t,L_{f(t)})=t$ and for every $i$ from $t-1$ to $1$ we define $\ell(x_i,L_{f(i)})$ as follows:
	\begin{itemize}
		\item If $N(x_i)$ is minimal in $B_{f(i)}$ for $X_i$ then we define $\ell(x_i,L_{f(i)})=i$.
		\item If $N(x_j)\cap B_{f(i)} \subset N(x_i)$ for every $x_j\in X_i$, then we define $\ell(x_i,L_{f(i)})=1+\max\{ \ell(y,L_{f(i)}) \mid y\in X_{i+1} \}$.
		\item Otherwise, we take $x_\min$ and $x_\max\in X_{i+1}$ such that 
		\[ N(x_\min)\cap B_{f(i)} \subseteq N(x_i)\cap B_{f(i)} \subseteq N(x_\max)\cap B_{f(i)} \]
		and $N(x_\min)\cap B_{f(i)}$ is maximal and $N(x_\max)\cap B_{f(i)}$ is minimal.
		We define $ \ell(x_i,L_{f(i)})=(\ell(x^\min,L_{f(i)})+\ell(x^\max,L_{f(i)}))/2. $
		Observe that the vertices $x_\min$ and $x_\max$ exist because $G[X\cup B_{f(i)}]$ is an $X$-interval graph and thus the neighborhoods of the vertices in $X$ in $B_{f(i)}$ are pairwise comparable for the inclusion.
	\end{itemize}
	
	By construction, we deduce the following properties on the lengths $\ell(x_i,L_j)$.
	
	\begin{claim}\label{claim:conditions}
		The following conditions hold for every $j\in [\beta]$:
		\begin{enumerate}
			\item For every $i\in [t]$, we have $\ell(x_i,L_j)\geq i$.
			\item For every $x,y\in X$, if $N(x)\cap B_j \subset N(y)$ then $\ell(x,L_j)< \ell(y,L_j)$.
		\end{enumerate}
	\end{claim}
	\begin{claimproof}
	We prove by induction on $i$ from $t$ to $1$ that Condition~(1) holds for $x_i$ and that Condition~(2) holds for every $x,y\in X_i$.
	That is obviously the case when $i=t$.
	Let $i\in[t-1]$ and suppose that Condition~(1) holds for every $x_{i+1},\dots,x_t$ and Condition~(2) holds for every $x,y\in X_{i+1}$.
	Let $j\in [\beta]$ such that $B_j\neq B_{f(i)}$.
	By construction, we have $\ell(x_i,L_j)=i$ so Condition~(1) holds for $x_i$ and $B_j$.
	Moreover, we have $N(x_i)\setminus B_{x_i} \subseteq N(y)$ for every $y\in X_i$.
	As $B_j\neq B_{f(i)}$, it follows that $N(x_i)\cap B_j \subseteq N(y)$ for every $y\in X_i$.
	The induction hypothesis implies that $\ell(y,L_j)\geq i+1$ for every $y\in X_i$.
	Hence, $\ell(x_i,L_j)<\ell(y,L_j)$ for every $y\in X_i$ and Condition~(2) holds for $X_i$ and every $B_j\neq B_{f(i)}$.
	
	It remains to prove that both Conditions holds for $B_{f(i)}$.
	We define $\ell(x_i,L_{f(i)})$ either to $i$, $a +1$ or $(b+c)/2$ where $a,b$ and $c$ belong to $\{\ell(y,L_{f(i)})\mid y\in X_{i+1}\}$.
	The induction assumption implies that $\ell(y,L_{f(i)})\geq i+1$ for every $y\in X_{i+1}$.
	Thus $a,b$ and $c$ greater than or equal to $i+1$.
	We deduce that $\ell(x_i,L_{f(i)})\geq i$. This proves that Condition~(1) holds for $x_i$.
	
	If $N(x_i)$ is minimal in $B_{f(i)}$ for $X_i$, then $\ell(x_i,L_j)<\ell(y,L_j)$ for every $y\in X_i$ and Condition~(2) is satisfied.
	Suppose that $N(x_i)$ is maximal in $B_{f(i)}$ for $X_i$, i.e. $N(y)\cap B_{f(i)}\subseteq N(x_i)$  for every $y\in X_{i}$.
	In this case, we set $\ell(x_i,L_{f(i)})$ is define as $1+ \max\{ \ell(y,L_{f(i)}) \mid y\in X_{i+1} \}$.
	Thus, we have $\ell(x_i,L_{f(i)})>\ell(y,L_{f(i)})$ for every $y\in X_{i+1}$ and Condition~(2) holds for $\leq_{B_{f(i)}}$.
	
	Finally, assume that $N(x_i)$ is neither minimal nor maximal in $B_{f(i)}$ for $X_i$.
	If $x^\max = x^\min$, then we have $\ell(x_i,L_{f(i)})=\ell(x^\max,L_{f(i)})$ and also $N(x_i)\cap B_{f(i)} = N(x^\min)\cap B_{f(i)}$.
	We deduce that Condition~(2) holds for every $x,y\in X_i$ because by induction hypothesis we have for every $x,y\in X_{i+1}$ and $x^\min \in X_{i+1}$.
	
	Now, assume that $x^\min\neq x^\max$.
	The way we choose $x^\min$ and $x^\max$ implies that 
	\[N(x^\min)\cap B_{f(i)} \subset N(x_i)\cap B_{f(i)} \subset N(x^\max) \cap B_{f(i)}.\]
	As $N(x^\min)\cap B_{f(i)} \subset N(x^\max)\cap B_{f(i)}$, by induction hypothesis, we deduce that $\ell(x^\min,L_{f(i)}) < \ell(x^\max,L_{f(i)})$ and thus we have
	\[ \ell(x^\min,L_{f(i)}) < \ell(x_i,L_{f(i)}) < \ell(x^\max,L_{f(i)})\]
	So Condition~(2) holds for every $x,y\in \{x_i,x^\min, x^\max\}$.
	
	Let $y\in X_{i+1}$ such that $N(x_i)\cap B_{f(i)} \subseteq N(y)$.
	The minimality of $N(x^\max)\cap B_{f(i)}$ implies that $N(x^\max)\cap B_{f(i)} \subseteq N(y)$.
	By induction hypothesis, we have $\ell(x^\max,L_{f(i)})\leq \ell(y,L_{f(i)})$.
	Thus, for every $y\in X_{i+1}$ such that $N(x_i)\cap B_{f(i)} \subseteq N(y)$, we have $\ell(x_i,L_{f(i)}) < \ell(y,L_{f(i)})$.
	
	Symmetrically, the maximality of $x^\min$ implies that for every $y\in X_{i+1}$ such that $N(y)\cap B_{f(i)} \subseteq N(x_i)$, we have $\ell(y,L_{f(i)}) < \ell(x_i,L_{f(i)})$.
	We conclude that Condition~(2) holds for every $x,y\in X_i$.
	By induction, we conclude that Condition~(1) holds for every $i\in[t]$ and Condition~(2) holds for every $x,y\in X$.
	This concludes the proof of Claim~\ref{claim:conditions}. 
\end{claimproof}
	
	Observe that each $T_{x_i}$ is a neighborhood subtree as by construction and Condition~(1) of Claim~\ref{claim:conditions} the lengths $\ell(x_i,L_j)$ satisfy the last item of Fact~\ref{fact:structural}. See Figure~\ref{fig:star} for an illustration of this neighborhood subtree.
	
	\begin{figure}
		\centering
		\includegraphics[width=0.7\linewidth]{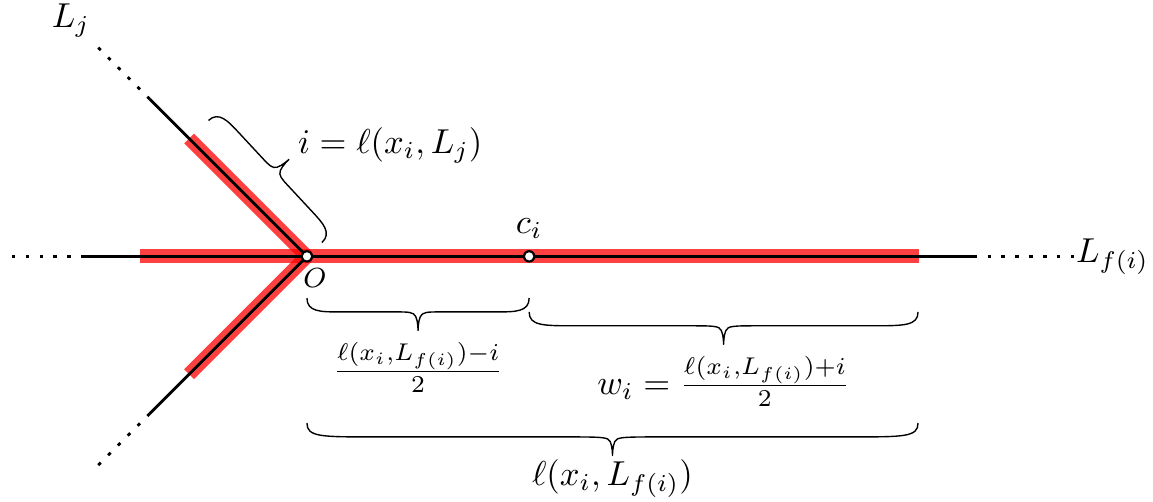}
		\caption{Illustration of the neighborhood subtree $T_{x_i}$ where the number $\beta$ of edges of the star NeS model is 4.}
		\label{fig:star}
	\end{figure}
	
	It remains to construct the neighborhood subtrees of the vertices in $V(G)\setminus X$.
	Let us explain how we do it for the vertices in $B_j$ for some $j\in[\beta]$. 
	For every $b\in B_j$, we define $T_b$ as an interval on the line segment $L_j$.
	As $(X,\cB)$ is a good partition, $G[X\cup B_j]$ admits a clique path $(X,K_1,\dots,K_k)$.
	Clique path properties implies that
	$X\supseteq K_1\cap X\supseteq \dots \supseteq K_k \cap X$.
	For every $i\in[k]$, $x\in X\setminus K_i$ and $y\in X\cap K_i$, we have $N(x)\cap B_j \subset N(y)$ and thus $\ell(x,L_j)<\ell(y,L_j)$ thanks to Condition~(2) of Claim~\ref{claim:conditions}.
	As $\ell(x,L_j)$ corresponds to the length of the interval $T_x\cap L_j$ for every $x\in X$, we conclude that there exist $p_1,\dots,p_k$ points on $L_j$ such that
	\begin{itemize}
		\item for every $i\in [k]$, $K_i\cap X$ is exactly the set of vertices $x$ in $X$ such that $p_i\in T_x$ and
		\item we have $0<d_\cT(p_1,c)<d_\cT(p_{2},c)< \dots <d_\cT(p_{k},c)$.
	\end{itemize}
	For every $b\in B_j$ such that $b$ is contained in every clique $K_i$ for $i$ between $s$ and $t$, we define $T_b$ as the interval of $L_j$ between $p_s$ and $p_t$.
	
	By construction, $(\cT,(T_v)_{v\in V(G)})$ fulfill every property of Fact~\ref{fact:structural}.
	We deduce that $(\cT,(T_v)_{v\in V(G)})$ is a NeS model of $G$.
	Obviously, the construction of  $(\cT,(T_v)_{v\in V(G)})$ can be done in polynomial time.
\end{proof}
	
It is easy to see that every graph that admits a star NeS model has a clique-tree that is a subdivided star. The converse is not true. In fact, for every graph $G$ with a clique tree that is a subdivided star with center $X$, the pair $(X,\cc(G-X))$ satisfies Properties~\ref{item:sqsubset} and~\ref{item:Xinterval} of Definition~\ref{def:goodpartition} but Property~\ref{item:elimination:order} might not be satisfied. See for example the graph in Figure~\ref{fig:CTandNES} and note that the pair $(\{a,b,c\},\cc(G-\{a,b,c\}))$ does not satisfy Property~\ref{item:elimination:order}, as after removing the vertex $c$ neither $a$ nor $b$ is removable from $\{a,b\}$. 

We now describe Algorithm~\ref{algo:goodpartition} that decides whether a graph $G$ admits a good partition. Clearly $G$ must be chordal, so we start by checking this. A chordal graph has $O(n)$ maximal cliques, and  for each maximal clique $X$ we try to construct a good partition $(X,\cA)$ of $G$.
We start with $\cA \leftarrow \cc(G-X)$ and note that $(X,\cA)$ trivially satisfies Property~\ref{item:sqsubset} of Definition~\ref{def:goodpartition}. Moreover, if $G$ admits a good partition with central clique $X$, then  $(X,\cA)$ must satisfy Property~\ref{item:Xinterval} of Definition~\ref{def:goodpartition}, and we check this in Line~\ref{line:checking1}.
Then, Algorithm~\ref{algo:goodpartition} iteratively in a while loop tries to construct a good permutation $(w_1,\dots,w_{\abs{X}})$ of $X$, while possibly merging some blocks of $\cA$ along the way so that it satisfies Property~\ref{item:elimination:order}, or discover that there is no good partition with central clique $X$.

For doing so, at every iteration of the while loop, Algorithm~\ref{algo:goodpartition} searches for a vertex $w$ in $W$ (the set of unprocessed vertices) such that $\notmin(w,W,\cA)$ -- the union of the blocks of $\cA$ where $N(w)$ is not minimal for $W$ (see Definition~\ref{def:notmin}) -- induces an $X$-interval graph with $X$.
If such a vertex $w$ exists, then Algorithm~\ref{algo:goodpartition} sets $w_r$ to $w$, increments $r$ and merges the blocks of $\cA$ contained in $\notmin(w_r,W,\cA)$ (Line~\ref{line:constructAp}) to make $w_r$ removable in $\cA$ for $W$. 
Otherwise, when no such vertex $w$ exists, Algorithm~\ref{algo:goodpartition} stops the while loop (Line~\ref{line:checking}) and tries another candidate for $X$.
For the graph in Figure~\ref{fig:CTandNES}, with $X=\{a,b,c\}$ the first iteration of the while loop will succeed and set $w_1=c$, but in the second iteration neither $a$ nor $b$ satisfy the condition of Line~\ref{line:choosingw}.

At the start of an iteration of the while loop, the algorithm has already choosen the vertices $w_1,w_2,...,w_{r-1}$ and $W=X\setminus \{w_1,\dots,w_{r-1}\}$. For every $i\in[r-1]$, $w_i$ is removable from $X\setminus\{w_1,\dots,w_{i-1}\}$ for $\cA$.
According to Definition \ref{def:removable} the next vertex $w_r$ to be removed should have $N(w_r)$ non-minimal for $W$ in at most one block of the good partition we want to construct. However, the neighborhood $N(w_r)$ may be non-minimal for $W$ in several blocks of the current partition $\cA$, since these blocks may be (unions of) separate components of $G \setminus X$ that should live on the same line segment of a star NeS model and thus actually be a single block which together with $X$ induces an $X$-interval graph. An example of this merging is given in Figure~\ref{fig:starNES}. The following definition captures, for each $w\in W$, the union of the blocks of $\cA$ where $N(w)$ is not minimal for $W$.


\begin{definition}[$\notmin$]\label{def:notmin}
	For $W\subseteq X$,  $x\in W$ and partition $\cA$ of $V(G)\setminus X$, we denote by $\notmin(x,W,\cA)$ the union of the blocks $A\in \cA$ where $N(x)$ is not minimal in $A$ for $W$.
\end{definition}

\begin{algorithm}[bth]
	\SetAlgoLined
	\KwIn{A graph $G$.}
	\KwOut{A good partition of $G$ or ``\textsf{no}''.}
	Check if $G$ is chordal and if so, compute its set of maximal cliques $\cX$, otherwise \Return \textsf{no}\;
	\For{every $X\in \cX$}
	{
		\lIf{there exists $C\in \cc(G-X)$ such that $G[X\cup C]$ is not an $X$-interval graph}{\KwSty{continue}}\label{line:checking1}
		$\cA\leftarrow \cc(G-X)$, $W\leftarrow X$ and $r\leftarrow 1$\;
		\While{$W\neq \emptyset$}
		{
		    \lIf{there exists $w\in W$ such that $G[X\cup \notmin(w_r,W,\cA)]$ is an $X$-interval graph\label{line:choosingw}}{$w_r\leftarrow w$}
			\lElse{\KwSty{break}}\label{line:checking}
			Replace the blocks of $\cA$ contained in $\notmin(w_r,W,\cA)$ by $\notmin(w_r,W,\cA)$\label{line:constructAp}\;
			$W\leftarrow W\setminus \{w_r\}$ and $r\leftarrow r+1$\;
		}
		\lIf{$W=\emptyset$}{\Return $(X,\cA)$}
	}
	\Return \textsf{no}\;
	\caption{}
	\label{algo:goodpartition}
\end{algorithm}

As we already argued, when Algorithm~\ref{algo:goodpartition} starts a while loop, $(X,\cA)$ satisfies Properties~\ref{item:sqsubset} and~\ref{item:Xinterval}, and it is not hard to argue that in each iteration, for every $i\in[r-1]$, $w_i$ is removable from $X\setminus\{w_1,\dots,w_{i-1}\}$ for $\cA$. Hence, when $W=\emptyset$, then $(w_1,\dots,w_{\abs{X}})$ is a good permutation of $(X,\cA)$ and Property~\ref{item:elimination:order} is satisfied.

\begin{lemma}\label{lem:goodOutput}
	If Algorithm~\ref{algo:goodpartition} returns $(X,\cB)$, then $(X,\cB)$ is a good partition.
\end{lemma}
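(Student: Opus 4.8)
The plan is to show that whenever Algorithm~\ref{algo:goodpartition} returns a pair $(X,\cA)$, this pair satisfies all three properties of Definition~\ref{def:goodpartition}. The algorithm only returns $(X,\cA)$ when it exits the while loop with $W=\emptyset$, so I will argue that at this moment each property holds. First I would verify that $X$ is a maximal clique: this is immediate since the algorithm iterates over maximal cliques $X\in\cX$ computed in the first line.

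For Property~\ref{item:sqsubset}, the key observation is to track how $\cA$ evolves. Initially $\cA\leftarrow\cc(G-X)$, so $\cc(G-X)\sqsubseteq\cA$ holds trivially (with equality). The only modification to $\cA$ is on Line~\ref{line:constructAp}, where several blocks are \emph{merged} into a single block $\notmin(w_r,W,\cA)$. Since merging blocks only coarsens the partition, the refinement relation $\cc(G-X)\sqsubseteq\cA$ is preserved throughout. Hence I would argue by induction on the iterations of the while loop that $\cc(G-X)\sqsubseteq\cA$ remains invariant, establishing Property~\ref{item:sqsubset}.

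For Property~\ref{item:Xinterval}, I need that $G[X\cup B]$ is an $X$-interval graph for every block $B\in\cA$ at termination. The blocks present at the end are of two kinds: those never touched (original components $C\in\cc(G-X)$), and those created by merging on Line~\ref{line:constructAp}. For the untouched blocks, Line~\ref{line:checking1} guarantees that each $C\in\cc(G-X)$ satisfies $G[X\cup C]$ is an $X$-interval graph. For a merged block $B=\notmin(w_r,W,\cA)$, the condition checked on Line~\ref{line:choosingw} before the merge is precisely that $G[X\cup\notmin(w_r,W,\cA)]$ is an $X$-interval graph, so the newly formed block satisfies the property at the moment of its creation. The subtle point I must check is that this property is not destroyed by later merges: a later block is again of the form $\notmin(w,W,\cA)$ and again verified on Line~\ref{line:choosingw} to be an $X$-interval graph, so whatever unions are formed are re-validated each time. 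Thus every block surviving to termination has been validated as inducing an $X$-interval graph with $X$.

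Property~\ref{item:elimination:order} is where I expect the main work, and this is exactly the content foreshadowed in the paragraph immediately before the lemma. I would take $(w_1,\dots,w_{\abs{X}})$ to be the order in which vertices are assigned by the algorithm, and argue that for each $i$, $w_i$ is removable from $X\setminus\{w_1,\dots,w_{i-1}\}$ for the \emph{final} partition $\cA$. The delicate issue is that removability is defined relative to $\cA$, but $\cA$ keeps changing after $w_i$ is chosen. When $w_r$ is selected, the merge on Line~\ref{line:constructAp} collapses exactly the blocks where $N(w_r)$ is not minimal for $W$ into one block, so at that instant $N(w_r)$ is minimal in all but (at most) one block, making $w_r$ removable. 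I must then show this removability survives all subsequent merges: since later merges only coarsen $\cA$ by taking unions of existing blocks, and $N(w_r)$ being minimal in each of a collection of blocks for a superset $W$ implies it remains minimal in their union, the count of blocks where $N(w_r)$ fails to be minimal cannot increase — so $w_r$ stays removable in the final $\cA$. The main obstacle is handling this interaction carefully, in particular checking that minimality is with respect to the correct vertex set $W=X\setminus\{w_1,\dots,w_{r-1}\}$ and that coarsening the partition preserves the ``minimal in at least $\abs{\cA}-1$ blocks'' condition even as $\abs{\cA}$ itself shrinks.
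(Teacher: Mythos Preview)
Your proposal is correct and follows essentially the same approach as the paper. The only stylistic difference is that for Property~\ref{item:elimination:order} the paper packages your ``removability survives later merges'' argument as the single observation that the current $\cA$ is always a refinement of the final partition $\cB$, so the unique block of $\cA$ where $N(w_i)$ fails to be minimal is contained in a unique block of $\cB$; this is exactly your coarsening argument stated once rather than iteratively.
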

\begin{proof}
	Assume that Algorithm~\ref{algo:goodpartition} returns a pair $(X,\cB)$.
	We show that  $(X,\cB)$ satisfies Definition \ref{def:goodpartition}.
    As the algorithm starts with $\cA=\cc(G-X)$ and only merge blocks, Property~\ref{item:sqsubset} holds.
    Line~\ref{line:checking} guarantees that every block of $\cc(G-X)$ induces with $X$ an $X$-interval graph.
    Moreover, when merging blocks, Line \ref{line:choosingw} checks that the merged block induces with $X$ an $X$-interval graph. Thus Property~\ref{item:Xinterval} holds.
	
    Observe that at every step of the while loop, we have $\cA\sqsubseteq \cB$ since we only merge blocks of $\cA$ and $\cB$ is the last value of $\cA$.
	We show that $(w_1,\dots,w_t)$ is a good permutation of $(X,\cB)$ where, for each $i\in[t]$, $w_i$ is the vertex chosen by the algorithm at the $i$-th iteration.
	Let $i\in [t]$. At the start of the $i$-th iteration of the while loop, the value of $W$ is $X_i=\{w_i,\dots,w_t\}$.
	During this iteration, we merge the blocks of $\cA$ contained in $\notmin(w_i,X_i,\cA)$.
	Consequently, after the merging, there exists a block $A_{w_i}$ of $\cA$ such that $N(w_i)$ is minimum for $X_i$ in every block of $\cA$ different from $A_{w_i}$.
	As $\cA\sqsubseteq \cB$, there exists a block $B_{w_i}$ of $\cB$ containing $A_{w_i}$.
	We deduce that, $N(w_i)$ is minimal for $X_i$ in every block of $\cB$ different from $B_{w_i}$, i.e. $w_i$ is removable from $X_i$ for $\cB$.
	Thus, $(w_1,\dots,w_t)$ is a good permutation of $(X,\cB)$ and we conclude that $(X,\cB)$ is a good partition of $G$.
\end{proof}

To prove the opposite direction, namely that if $G$ has a good partition  $(X,\cB)$ associated with a good permutation $(x_1,\dots,x_t)$, then Algorithm \ref{algo:goodpartition} finds a good partition, we need two lemmata.
 The easy case is when Algorithm~\ref{algo:goodpartition} chooses consecutively $w_1=x_1,\dots,w_t=x_t$, and we can use Lemma~\ref{lem:goodPermut} to prove that it will not return \textsf{no}.
 However, Algorithm~\ref{algo:goodpartition} does not have this  permutation as input and at some iteration with $w_1=x_1,\dots,w_{r-1}=x_{r-1}$, the algorithm might stop to follow the permutation $(x_1,\dots,x_t)$ and choose a vertex $w_r=x_i$ with $r<i$ because $x_r$ may not be the only vertex satisfying the condition of Line~\ref{line:choosingw}.
 In Lemma \ref{lem:noMistake} we show that choosing $w_r=x_i$ is then not a mistake as it implies the existence of another good partition and another good permutation that starts with $(x_1,\dots,x_{r-1},w_r=x_i)$. See Figure~\ref{fig:starNES} for an example of a very simple graph with several good permutations leading to quite distinct star NeS models.

\begin{figure}
	\centering
	\includegraphics[width=\textwidth]{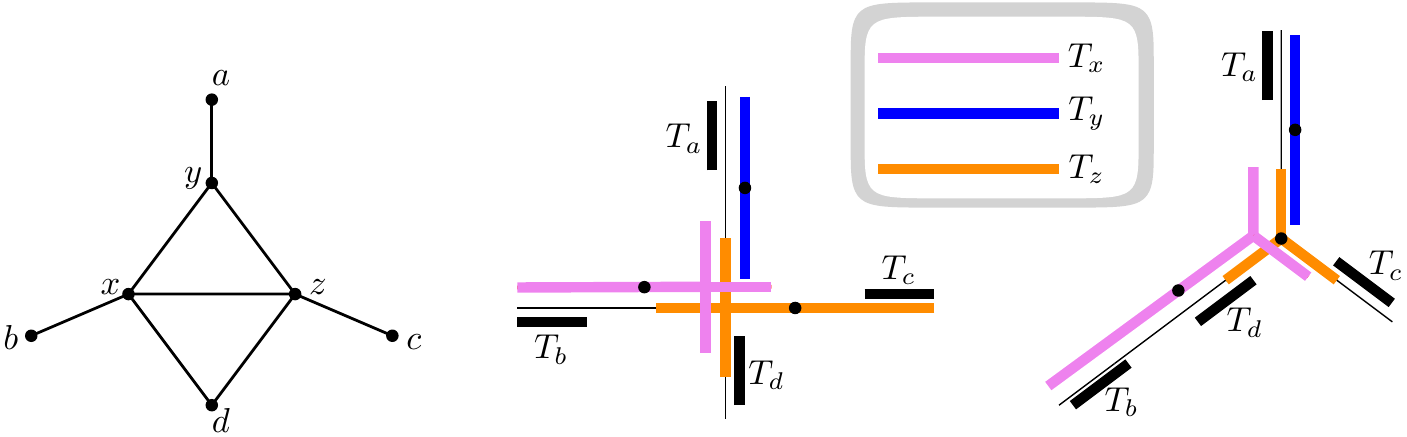}
	\caption{A graph $G$ and two star NeS models. 
	The dots are the centers of the subtrees.
	$X = \{x,y,z\}$ is a maximal clique with four components in $G-X$. The left NeS model corresponds to permutations $(y,x,z)$ or $(y,z,x)$ and the right to permutation $(x,y,z)$.
	If Algorithm~\ref{algo:goodpartition} chooses $w_1=x$, then the components $\{b\}$ and $\{d\}$ will be merged in Line~\ref{line:constructAp}.
	There is a third star NeS model, corresponding to permutation $(z,y,x)$, similar to the one on the right. The last two permutations of $X$ are not good permutations.}
	\label{fig:starNES}
\end{figure}

We need some definitions. 
Given permutation $P=(x_1,\dots,x_\ell)$ of a subset of $X$ and $i\in[\ell]$, define $\cA_0^P=\cc(G-X)$ and $\cA_i^P$ the partition of $V(G)\setminus X$ obtained from $\cA_{i-1}^P$ by merging the blocks contained in $\notmin(x_i,X\setminus \{x_1,\dots,x_{i-1}\},\cA_{i-1}^P)$. 
Observe that  when Algorithm~\ref{algo:goodpartition} treats $X$ and we have $w_1=x_1,\dots,w_\ell=x_\ell$, then the values of $\cA$ are successively $\cA_0^P,\dots,\cA_\ell^P$.
The following lemma proves that if there exists a good permutation $P=(x_1,\dots,x_t)$ and at some iteration we have $w_1=x_1,\dots,w_{r-1}=x_{r-1}$, then the vertex $x_r$ satisfies the condition of Line~\ref{line:choosingw} and Algorithm~\ref{algo:goodpartition} does not return \textsf{no} during this iteration.
Thus, as long as Algorithm~\ref{algo:goodpartition} follows a good permutation, it will not return \textsf{no}.

\begin{lemma}\label{lem:goodPermut}
	Let $G$ be a graph with good partition $(X,\cB)$ and $P=(x_1,\dots,x_t)$ be a good permutation of $(X,\cB)$.
	For every $i\in [t]$, we have $\cA^P_{i}\sqsubseteq \cB$ and the graph $G[X\cup \notmin(x_{i},\{x_{i},\dots,x_t\},\cA^P_{i-1})]$ is an $X$-interval graph.
\end{lemma}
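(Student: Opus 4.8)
The plan is to prove the two assertions simultaneously by induction on $i$, carrying the refinement invariant $\cA^P_{i-1}\sqsubseteq\cB$ through the induction. For the base case, note that $\cA^P_0=\cc(G-X)$, and Property~\ref{item:sqsubset} of the good partition $(X,\cB)$ gives $\cc(G-X)\sqsubseteq\cB$, so $\cA^P_0\sqsubseteq\cB$. Assume now $\cA^P_{i-1}\sqsubseteq\cB$ and write $X_i=\{x_i,\dots,x_t\}$.

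The crux is to show that $\notmin(x_i,X_i,\cA^P_{i-1})$ is contained in a single block of $\cB$. Since $P$ is a good permutation, $x_i$ is removable from $X_i$ for $\cB$, so $N(x_i)$ is minimal in $B$ for $X_i$ for all but at most one block $B^{\ast}\in\cB$. I claim every block $A\in\cA^P_{i-1}$ contributing to $\notmin(x_i,X_i,\cA^P_{i-1})$ is contained in $B^{\ast}$. Indeed, if $N(x_i)$ is not minimal in $A$ for $X_i$, there are $a\in A$ and $y\in X_i$ with $a\in N(x_i)$ and $a\notin N(y)$; letting $B\in\cB$ be the block containing $A$ (which exists by the invariant $\cA^P_{i-1}\sqsubseteq\cB$), the witness $a$ shows $N(x_i)$ is not minimal in $B$ for $X_i$, hence $B=B^{\ast}$. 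Therefore $\notmin(x_i,X_i,\cA^P_{i-1})\subseteq B^{\ast}$, where this union is empty if no exceptional block exists.

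Both conclusions now follow. First, $\cA^P_i$ is obtained from $\cA^P_{i-1}$ by fusing the blocks inside $\notmin(x_i,X_i,\cA^P_{i-1})$ into one block; since that union sits inside $B^{\ast}$, the fused block is still contained in $B^{\ast}$ and every other block is unchanged, so $\cA^P_i\sqsubseteq\cB$, completing the inductive step. Second, $G[X\cup\notmin(x_i,X_i,\cA^P_{i-1})]$ is an induced subgraph of $G[X\cup B^{\ast}]$ obtained by deleting only vertices outside $X$; by Property~\ref{item:Xinterval}, $G[X\cup B^{\ast}]$ is an $X$-interval graph, so it suffices to know that the class of $X$-interval graphs is closed under deletion of vertices outside $X$.

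The main obstacle — hence the one point I would write out in full — is this heredity. I would argue it through the interval-model reading of Definition~\ref{def:Xinterval}: a clique path $(K_1,\dots,K_k,X)$ of $G[X\cup B^{\ast}]$ yields an interval model (place the $k+1$ cliques left to right and give each vertex the interval spanning the cliques containing it) in which a vertex's interval contains the rightmost point iff the vertex lies in $X$, since $X$ is the rightmost clique. Deleting the intervals of the removed vertices of $B^{\ast}$ leaves the rightmost point covered by exactly the intervals of $X$; moreover $X$ remains a maximal clique of the smaller graph, because no vertex outside $X$ is complete to $X$ ($X$ being a maximal clique of $G$). Hence $X$ is again the rightmost clique of a clique path of the resulting graph, witnessing that the induced subgraph is $X$-interval. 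Finally, when $\notmin(x_i,X_i,\cA^P_{i-1})=\emptyset$ the graph is simply $G[X]$, which is trivially $X$-interval with clique path $(X)$.
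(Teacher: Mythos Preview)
Your proof is correct and follows essentially the same route as the paper: induct on $i$ to maintain $\cA^P_{i-1}\sqsubseteq\cB$, use removability of $x_i$ from $X_i$ for $\cB$ to locate the unique block $B^\ast$ containing $\notmin(x_i,X_i,\cA^P_{i-1})$, and conclude both refinement and the $X$-interval property from Property~\ref{item:Xinterval}. The paper asserts the heredity of the $X$-interval property under deleting vertices outside $X$ without comment, whereas you spell it out via the interval model; this is a nice addition but not a departure in strategy.
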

\begin{proof}
	For every $i\in [t]$, we denote $\{x_i,\dots,x_t\}$ by $X_i$.
	We start by proving by induction that, for every $i\in[t]$, we have $\cA^P_i\sqsubseteq \cB$. This is true for $\cA^P_0=\cc(G-X)$ by Property~\ref{item:sqsubset} of Definition~\ref{def:goodpartition}.
	Let $i\in[t]$ and suppose that $\cA^P_{i-1}\sqsubseteq \cB$.
	Every block of $\cA^P_i$ different from $\notmin(x_i,X_i,\cA^P_{i-1})$ is a block of $\cA^P_{i-1}$ and is included in a block of $\cB$ by the induction hypothesis.
	As $x_i$ is removable from $X_i$ for $\cB$, there exists $B_{x_i}\in\cB$ such that $N(x_i)$ is minimal for $X_i$ in every block of $\cB$ different from $ B_{x_i}$.
	Since $\cA^P_{i-1}\sqsubseteq \cB$, for every block $\widehat{A}\in \cA^P_{i-1}$ such that $N(x_i)$ is not minimal in $\widehat{A}$ for $X_i$, we have $\widehat{A}\subseteq  B_{x_i}$.
    As the union of these blocks $\widehat{A}$'s is $\notmin(x_i,X_i,\cA^P_{i-1})$, we deduce that $\notmin(x_i,X_i,\cA^P_{i-1})\subseteq  B_{x_i}$.
	Hence, every block of $\cA^P_i$ is included in a block of $\cB$, thus by induction $\cA^P_i\sqsubseteq \cB$ for every $i\in [t]$.
	
	For every $i\in[t]$, we have proved that $\notmin(x_{i},X_i,\cA^P_{i-1})\subseteq B_{x_i}$.
	From Property~\ref{item:Xinterval} of Definition~\ref{def:goodpartition}, $G[X\cup  B_{x_i}]$ is an $X$-interval graph. We conclude that $G[X\cup \notmin(x_{i},X_i,\cA^P_{i-1})]$ is also an $X$-interval graph.
\end{proof}

\begin{lemma}\label{lem:noMistake}
	Let $P=(x_1,\dots,x_t)$ be a good permutation of $X$ and $i\in [t]$.
	For every $w\in \{x_i,\dots,x_t\}$ such that $G[X\cup \notmin(w,\{x_i,\dots,x_t\},\cA_{i-1}^P)]$ is an $X$-interval graph, there exists a good permutation of $X$ starting with $(x_1,\dots,x_{i-1},w)$.
\end{lemma}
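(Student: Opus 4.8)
The plan is to keep the prefix $x_1,\dots,x_{i-1}$ fixed and simply pull $w$ to the front of what remains. If $w=x_i$ we are done with $P'=P$, so assume $w=x_j$ with $j>i$ and set
$P' = (x_1,\dots,x_{i-1},\,x_j,\,x_i,x_{i+1},\dots,x_{j-1},\,x_{j+1},\dots,x_t)$.
Writing $X_k=\{x_k,\dots,x_t\}$, the good partition I would pair with $P'$ is $\cB' := \cA^{P'}_t$, the partition obtained by running the merging loop of \cref{algo:goodpartition} along the order $P'$. The key reduction is the one already implicit in \cref{lem:goodOutput}: if, at every step, the set that the procedure tries to merge induces together with $X$ an $X$-interval graph (so the procedure never breaks), then the output $(X,\cB')$ is automatically a good partition and the order followed, $P'$, is a good permutation of it. Hence the whole lemma reduces to showing that the merge procedure does not get stuck when it follows $P'$.

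I would isolate two elementary monotonicity facts about minimality. \textbf{(O1)} If $N(x)$ is minimal in a block $A$ for a set $Y$ and $Y'\subseteq Y$, then $N(x)$ is minimal in $A$ for $Y'$; so removability can only be helped by shrinking the comparison set. \textbf{(O2)} If $N(x)$ is minimal for $Y$ in all but at most one block of a partition, the same holds after merging blocks (a merged block is minimal exactly when all its parts are), so this property survives any coarsening. With these the first $i$ steps are immediate. For the steps $1,\dots,i-1$ the orders $P$ and $P'$ coincide, so $\cA^{P'}_k=\cA^{P}_k$ and the sets merged are exactly those of $P$, which are $X$-interval by \cref{lem:goodPermut}. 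At step $i$ the procedure processes $x_j$ with remaining set $X_i$ and current partition $\cA^P_{i-1}$, so the merged set is precisely $M:=\notmin(x_j,X_i,\cA^P_{i-1})$, which is $X$-interval by hypothesis; afterwards $x_j$ is minimal in every block but the new one (its parts outside $M$ keep $x_j$ minimal for $X_i$), i.e.\ $x_j$ is removable, as needed at position $i$.

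The crux, and the step I expect to be the main obstacle, is the tail $x_i,\dots,x_{j-1},x_{j+1},\dots,x_t$ (the original relative order with $x_j$ deleted and $M$ already merged). Here $P'$ no longer mirrors $P$, and a coarsening of $\cB$ will not suffice: on small examples one sees that the correct $\cB'$ is genuinely incomparable to $\cB$ (the block grown from $M$ cuts across several blocks of $\cB$, while the leftover pieces split off). The plan is an induction over these steps maintaining the invariant that every block of the current partition is contained either in a single block of $\cB$ or in the one distinguished block $M^\star$ grown out of $M$. Granting the invariant, the set that a vertex $x_k$ merges is, by (O1) (its comparison set is a subset of $X_k$) and (O2) (blocks have only been merged), contained in the union of the unique ``bad'' block of $\cB$ for $x_k$ and $M^\star$; in particular the procedure never merges two non-$M^\star$ blocks living in different blocks of $\cB$, so the invariant is self-maintaining. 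The part of the merged set that lies inside a single block of $\cB$ is $X$-interval by restriction, using property~\ref{item:Xinterval} of \cref{def:goodpartition}, so everything reduces to showing that $M^\star$ stays an $X$-interval graph with $X$ as it absorbs new material.

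This last point is the heart of the matter and the hardest to make rigorous: a priori the union of two blocks, each $X$-interval, need not be $X$-interval, because the $X$-neighbourhoods can become pairwise incomparable (one builds abstract configurations where block $B_1$ orders two vertices one way and block $B_2$ the opposite). What rules these out is the interaction of three things: the hypothesis that $M$ itself is $X$-interval, which already forces one globally consistent nesting of the $X$-neighbourhoods it sees; the fact that $X$ is a \emph{maximal} clique of $G$, so no vertex of $V(G)\setminus X$ is adjacent to all of $X$, which is exactly what excludes the incomparable configurations (in such configurations $X$ would be forced into the middle of the clique path and $M$ would fail to be $X$-interval); and the good-permutation order of $P$, which guarantees that whenever a vertex is pulled into $M^\star$ the vertices that could ``separate'' two $X$-neighbourhoods in it have already been processed. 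Formalising that these constraints keep the nesting of the $X$-neighbourhoods of $M^\star$ consistent at every absorption is the technical core I would have to push through; once it is in place the procedure completes, $\cB'=\cA^{P'}_t$ is a good partition, and $P'$ is the required good permutation beginning with $(x_1,\dots,x_{i-1},w)$.
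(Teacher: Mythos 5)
Your reduction of the lemma to ``the merging procedure of Algorithm~\ref{algo:goodpartition} never breaks when run along $P'$'' is sound (this is essentially Lemma~\ref{lem:goodOutput}), and your handling of the prefix and of step $i$ is correct. The fatal problem is the choice of $P'$ itself: pulling $w$ to the front while keeping the remaining vertices in their original relative order is wrong, and no amount of work on your ``technical core'' can repair it, because the conclusion of the lemma can fail for that specific permutation. Concretely, let $X=\{x_1,x_2,x_3,x_4\}$ be a clique and let $u,v,c$ be three pairwise nonadjacent vertices with $N(u)=\{x_1,x_4\}$, $N(v)=\{x_4\}$ and $N(c)=\{x_1\}$ ($G$ is a split graph and $X$ a maximal clique). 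Then $\cB=\{\{u,c\},\{v\}\}$ makes $(X,\cB)$ a good partition with good permutation $P=(x_1,x_2,x_3,x_4)$, and with $i=1$, $w=x_4$ the hypothesis of the lemma holds, since $\notmin(x_4,X,\cA_0^P)=\{u\}\cup\{v\}$ and $N(v)\subseteq N(u)$, so $G[X\cup\{u,v\}]$ is an $X$-interval graph. Your permutation is $P'=(x_4,x_1,x_2,x_3)$, but \emph{no} good permutation at all starts with $(x_4,x_1)$: if $x_4$ is removable from $X$ for some good partition, then $u$ and $v$ must share a block $B$ (since $N(x_4)$ is non-minimal for $X$ in both $\{u\}$ and $\{v\}$), and $c\notin B$ because $G[X\cup\{u,v,c\}]$ is not an $X$-interval graph; but then $N(x_1)$ is non-minimal for $\{x_1,x_2,x_3\}$ both in $B$ (witness $u\notin N(x_2)$) and in the block containing $c$ (witness $c\notin N(x_2)$), so $x_1$ is not removable. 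Accordingly, your run breaks at its second step: $\notmin(x_1,\{x_1,x_2,x_3\},\{\{u,v\},\{c\}\})=\{u,v,c\}$, which together with $X$ is not an $X$-interval graph.

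This pinpoints the missing idea: the tail must be \emph{reordered}, not preserved. The paper's proof builds an explicit new partition $\cB_\new$ (the blocks $A_2,\dots,A_k$ of $\notmin(w,X_i,\cA_{i-1}^P)$ are moved into the block $B_1$ of $\cB$ containing $A_1$, after the normalization of Claim~\ref{claim:Bmax}), and in Claim~\ref{claim:middle} it permutes the vertices $x_i,\dots,x_{p-1}$ (those preceding $w=x_p$ in $P$) by increasing $N(\cdot)\cap\widehat{B}_1$; in the example above the valid continuation is $(x_4,x_2,x_3,x_1)$, with $x_1$ forced to the end precisely because its neighborhood meets the merged block. Separately, even where your plan is on track, you explicitly leave its core unproven, namely that the growing block stays an $X$-interval graph with $X$ as it absorbs material. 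The paper's tool for this is the relation $\leq_X$ of Definition~\ref{def:leqX} together with Claims~\ref{claim:comparable} and~\ref{claim:leqX} (blocks whose union with $X$ induces an $X$-interval graph are $\leq_X$-comparable, and a union of pairwise $\leq_X$-comparable $X$-interval blocks is again $X$-interval); some such mechanism is indispensable, since $X$-intervality of a union of blocks is exactly what fails in the counterexample.
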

\begin{proof}
	This lemma relies on the following relation that will be applied to vertex subsets corresponding to unions of connected components of $G-X$.
	
	\begin{definition}\label{def:leqX}
		For every $A\subseteq V(G)\setminus X$, we define $\maxcap(A)=N(A)\cap X$ and $\mincap(A)=\cap_{a\in A}N(a)\cap X$.
		Given $A_1,A_2\subseteq V(G)\setminus X$,  we say that $A_1 \leq_X A_2$ if $\maxcap(A_1)\subseteq \mincap(A_2)$.
	\end{definition}
	
		As an example of the use of this, note that in Figure \ref{fig:starNES}, we have $\{b\} \leq_{\{x,y,z\}} \{d\}$ and in the rightmost star NeS model we see this ordering reflected by $d$ being placed closer to the center of the star than $b$.

	For every $A\subseteq V(G)\setminus X$, we have $\mincap(A)\subseteq \maxcap(A)$ and thus $\leq_X$ is transitive.
	The following claim reveals the connection between Definition~\ref{def:leqX} and Property~\ref{item:Xinterval} of Definition~\ref{def:goodpartition}.
	
	\begin{claim}\label{claim:comparable}
		Let $P=(x_1,\dots,x_t)$ be a good permutation of $X$ and $\ell\in \{0,\dots,t\}$.
		For all distinct blocks $A,A^\star\in \cA^P_\ell$, if $G[X\cup A\cup A^\star]$ is an $X$-interval graph, then $A$ and $A^\star$ are comparable for $\leq_X$.
    \end{claim}
	\begin{claimproof}
	  	We prove this claim by induction on $\ell$.
		By definition, we have $\cA_0^P=\cc(G-X)$. Let $C,C^\star$ be two distinct components of $G-X$ such that $G[X\cup C \cup C^\star]$ is an $X$-interval graph.
		Let $(K_1,\dots,K_k,X)$ be a clique path of $G[X\cup C \cup C^\star]$.
		Since $C$ and $C^\star$ are distinct connected components, for every $j\in [k]$, $K_j$ is either a maximal clique of $G[X\cup C]$ or $G[X\cup C^\star]$.
		Assume w.l.o.g. that $K_1$ is a maximal clique of $G[X\cup C]$.
		Moreover, there exists $j\in [k-1]$ such that $X,K_1,K_2,\dots,K_j$ are all the maximal cliques of $G[X\cup C]$ and $X,K_{j+1},\dots,K_k$ are all the maximal cliques of $G[X\cup C^\star]$.
		Since $(K_1,\dots,K_k,X)$ is a clique path, we have $K_1\cap X\subseteq \dots \subseteq K_k\cap X\subseteq X$.
	    From Definition~\ref{def:leqX}, we deduce that $\maxcap(C)=K_j\cap X$ and $\mincap(C^\star)=K_{j+1}\cap X$.
		As $K_j\cap X \subseteq K_{j+1}\cap X$, we have $\maxcap(C)\subseteq \mincap(C^\star)$ and thus $C\leq_X C^\star$ and the claim holds for $\cA_0^P$.
		
		Let $\ell\in [t]$ and suppose that the claim holds for $\cA_{\ell-1}^P$.
		In the rest of this proof, we use the shorthand $\notmin(x_\ell)$ to denote $\notmin(x_\ell,\{x_\ell,\dots,x_t\},\cA_{\ell-1}^P)$.
		We obtain $\cA_{\ell}^P$ from $\cA_{\ell-1}^P$ by merging the blocks of $\cA_{\ell-1}^P$ contained in $\notmin(x_\ell)$.
		So every block of $\cA_{\ell}^P$ different from $\notmin(x_\ell)$ is also a block of $\cA_{\ell-1}^P$.
		By assumption, the claim holds for every pair of distinct blocks of $\cA_{\ell}^P \setminus \{\notmin(x_\ell)\}$.
		To prove that it holds for $\cA_\ell^P$, it is enough to prove that it holds for every pair $(A,\notmin(x_\ell))$ with $A\neq \notmin(x_\ell)$.
		
		Let $A\in \cA_{\ell}^P$ such that $A\neq \notmin(x_\ell)$ and $G[X\cup A\cup \notmin(x_\ell)]$ is an $X$-interval graph.
		We need to prove that $A\leq_X \notmin(x_\ell)$ or $\notmin(x_\ell)\leq_X A$.
		Observe that $\notmin(x_\ell)$ is a union of blocks $A_1,\dots,A_k$ of $\cA_{\ell-1}^P$.
		As $G[X\cup A \cup \notmin(x_\ell)]$ is an $X$-interval graph, also for any two blocks $B,C$ out of $A,A_1,\dots,A_k$ the graph $G[X\cup B\cup C]$ is an $X$-interval graph, and thus by induction hypothesis, we deduce that every pair of blocks among $A,A_1,\dots,A_k$ is comparable for $\leq_X$.
		Suppose w.l.o.g. that $A_1\leq_X \dots\leq_X A_k$.
		
		Assume towards a contradiction that there exists $j\in [k-1]$ such that $A_j\leq_X A \leq_X A_{j+1}$.
		By Definition~\ref{def:notmin}, we have 
		\[ \maxcap(A_j) \subseteq \mincap(A)\subseteq \maxcap(A)\subseteq \mincap(A_{j+1}). \]
		Since $N(w)$ is not minimal in $A_j$ for $X_\ell$, we have $N(w)\cap A_j\neq \emptyset$ and thus $w\in \maxcap(A_j)$.
		We deduce that $w$ belongs to $\mincap(A)$ and $\mincap(A_{j+1})$ and thus $A,A_{j+1}\subseteq N(w)$.
		As $N(w)$ is not minimal in $A_{j+1}$ and $A_{j+1}\subseteq N(w)$, we deduce that there exists $y\in X_\ell$ such that $A_{j+1}\not\subseteq N(y)$ which means that $y\notin \mincap(A_{j+1})$.
		Since $\maxcap(A)\subseteq \mincap(A_{j+1})$, we have $y\notin \maxcap(A)$.
		Hence, $A\subseteq N(w)$ and $N(y)\cap A=\emptyset$.
		We conclude that $N(w)$ is not minimal in $A$ for $X_\ell$, a contradiction with $A\neq\notmin(x_\ell)$ because $N(w)$ is minimal for $X_\ell$ in every block of $\cA_\ell^P$ different from $\notmin(x_\ell)$.
		
		It follows that $A\leq_X A_1$ or $A_k\leq_X A$.
		Since $\notmin(x_\ell)=A_1\cup\dots\cup A_k$, we deduce that $A\leq_X \notmin(x_\ell)$ or $\notmin(x_\ell)\leq_X A$, that is the claim holds for $\cA_{\ell}^P$.
		By induction, we conclude that the claim is true for every $\ell\in \{0,\dots,t\}$.
	\end{claimproof}
	
	\begin{claim}\label{claim:leqX}
		Let $X\subseteq V(G)$ and $\cA$ a partition of $V(G)\setminus X$ such that $\cc(G-X)\sqsubseteq \cA$.
		For any $A_1,\dots,A_k\in \cA$ that are pairwise comparable for $\leq_X$, if for each $j\in [k]$ $G[X\cup A_j]$ is an $X$-interval graph, then $G[X\cup A_1\cup \dots \cup A_k]$ is an $X$-interval graph.
    \end{claim}
	\begin{claimproof}
	    Let $A_1,\dots,A_k$ be $k$ blocks of $\cA$ such that $A_1\leq_X A_2\leq_X \dots \leq_X A_k$.
	    Assume that $G[X\cup A_j]$ is an $X$-interval graph for each $j\in [k]$.
	    
	    Since $G[X\cup A_j]$ is an $X$-interval graph for every $j\in[k]$, $G[X\cup A_j]$ admits a clique path $(K_1^j,\dots,K_{f(i)}^j,X)$, where $f(j)$ is the number of maximal cliques minus one in $G[X\cup A_j]$.
	    For every $j\in [k]$, we have $K_1^j \cap X \subseteq K_2^j\cap X \subseteq \dots \subseteq K_{f(j)}^j\cap X$.
	    Moreover, by definition, we have $\mincap(A_i)=K_1^j$ and $\maxcap(A_i)=K_{f(j)}^j$.
	    For every $j\in [k-1]$, since $A_j\leq_X A_{j+1}$, we have $\maxcap(A_j)\subseteq \mincap(A_{j+1})$, that is $K_{f(j)}^j \cap X \subseteq K_1^{j+1}\cap X$.
	    Because $\cc(G-X) \sqsubseteq \cA$, every maximal clique of $G[X\cup A_1\cup \dots \cup A_k]$  different from $X$ is a maximal clique of a unique $G[X\cup A_j]$.
	    We conclude that $(K_1^1,\dots,K_{f(1)}^1,\dots,K_1^k,\dots,K_{f(k)}^k,X)$ is a clique path of $G[X\cup A_1\cup \dots \cup A_k]$.
	\end{claimproof}
        
		For every $\ell\in [t]$, we denote by $X_\ell$ the set $\{x_\ell,\dots,x_t\}$.
		Let $w\in X_i$ such that $G[X\cup \notmin(w,X_i,\cA_{i-1}^P)]$ is an $X$-interval graph.
		Let $(X,\cB)$ be a good partition of $G$ such that $(x_1,\dots,x_t)$ is a good permutation of $(X,\cB)$. 
		If $w=x_p$ is removable from $\cB$ for $X_i$, then $(x_1,\dots,x_{i-1},w,x_i,\dots,x_{p-1},x_{p+1},\dots,x_t)$ is a good permutation of $(X,\cB)$ and we are done.
		In particular, $w$ is removable from $\cB$ if $\abs{\notmin(w,X_i,\cA_{i-1}^P)}\leq 1$.
		In the following, we assume that $\abs{\notmin(w,X_i,\cA_{i-1}^P)}\geq 2$.
		
		We construct a good partition $(X,\cB_\new)$ that admits good permutation starting with $(x_1,\dots,x_{i-1},w)$.
		Let $A_1,\dots,A_k\in \cA_{i-1}^P$ such that $\notmin(w,X_i,\cA_{i-1}^P)=A_1\cup A_2 \cup \dots\cup A_k$.
		As $G[X\cup \notmin(w,X_i,\cA_{i-1}^P)]$ is an $X$-interval graph, by Claim~\ref{claim:comparable}, the blocks $A_1,\dots,A_k$ are pairwise comparable for $\leq_X$.
		Suppose w.l.o.g. that $A_1\leq_X \dots \leq_X A_k$. In Figure~\ref{fig:starNES} with $\notmin(x,\{x,y,z\},\{\{a\},\{b\},\{c\},\{d\}\})=\{b\} \cup \{d\}$ we have $A_1=\{b\}$ and $A_2=\{d\}$.
		
		By Lemma~\ref{lem:goodPermut}, we have $\cA_{i-1}^P \sqsubseteq \cB$.
		Thus, there exists a block $B_1$ of $\cB$ containing $A_1$ and $B_1$ is a union of blocks of $\cA_{i-1}^P$.
 		Let $B_1^\max$ be the union of all the blocks $A$ of $\cA_{i-1}^P$ included in $B_1$ such that $A$ is not contained in $\notmin(w,X_i,\cA_{i-1}^P)$ and $A_1\leq_X A$.
		Note that for every $A\in B_1^\max$, we have $A_k\leq_X A$ because otherwise we have $A_1\leq_X A \leq_X A_k$ and that implies $A\subseteq \notmin(w,X_i,\cA_{i-1}^P)$ (see the arguments used in the proof of Claim~\ref{claim:comparable}).
		
		\begin{claim}\label{claim:Bmax}
		    We can assume that $B_1^\max$ is empty.
		\end{claim}
		\begin{claimproof}
			Suppose that $B_1^\max\neq \emptyset$.
			Let $B_1^\star=B_1\setminus B_1^\max$ and $\cB^\star$ the partition obtained from $\cB$ by replacing the block $B_1$ with the blocks $B_1^\max$ and $B_1^\star$.
			As $B_1^\max$ is a union of blocks of $\cA_{i-1}^P$ and $\cA_{i-1}^P\sqsubseteq \cB$, by construction, we have $\cA_{i-1}^P\sqsubseteq \cB^\star$.
			We claim that $(X,\cB^\star)$ is a good partition that admits $(x_1,\dots,x_t)$ as a good permutation, this is sufficient to prove the claim.
			
			As $\cc(G-X)\sqsubseteq \cA_{i-1}^P$, by definition we have $\cc(G-X)\sqsubseteq \cB^\star$.
			So Property~\ref{item:sqsubset} of Definition~\ref{def:goodpartition} is satisfied.
			As $B_1^\star$ and $B_1^\max$ are subsets of $B_1$, we deduce that $G[X\cup B_1^\star]$ and $G[X\cup B_1^\max]$ are both $X$-interval graphs.
			Thus, Property~\ref{item:Xinterval} is satisfied.
			
			It remains to prove that $(x_1,\dots,x_t)$ is a good permutation of $(X,\cB^\star)$.
			First, observe that for every $j\in[i-1]$, by definition, $x_j$ is removable from $X_j$ for $\cA_j^P$.
			Since $\cA_j^P\sqsubseteq \cA_{\ell-1}^P \sqsubseteq \cB^\star$, we deduce that $x_j$ is removable from $X_j$ for $\cB^\star$.
			
			Let $j\in \{i,\dots,t\}$.
			The vertex $x_j$ is not minimal in at most one block of $\cB$ for $X_j$.
			We claim that $N(x_j)$ is minimal in $B_1^\max$ for $X_j$.
			Since the partition $\cB^\star$ is obtained from $\cB$ by splitting the block $B_1$ into $B_1^\star$ and $B_1^\max$, this implies that $x_j$ is removable from $X_j$ for $\cB^\star$.
			We prove that $N(x_j)$ is minimal in $B_1^\max$ for $X_j$ by showing that, for every $x_\ell\in X_i$, we have $B_1^\max\subseteq N(x_\ell)$.
			
			Since $A_1\subseteq \notmin(w,X_i, \cA_{i-1}^P)$, we have $N(w)\cap A_1\neq \emptyset$ and thus $w\in \maxcap(A_1)$. 
			Let $A$ be a block of $\cA_{i-1}^P$ contained in $B_1^\max$.
			As $A_1\leq_X A$, we deduce that $w\in \mincap(A)$, that is $A\subseteq N(w)$.
			By definition, $A$ is not contained in $\notmin(w,X_i, \cA_{i-1}^P)$ and thus $N(w)$ is minimal in $A$ for $X_i$. Consequently, for every $x_\ell\in X_i$, we have $A\subseteq N(x_\ell)$.
			As this holds for every $A\subseteq B_1^\max$, we deduce that $B_1^\max \subseteq N(x_\ell)$ for every $x_\ell\in X_i$.
			This ends the proof of Claim~\ref{claim:Bmax}.
		\end{claimproof}
		
		From now, based on Claim~\ref{claim:Bmax}, we assume that $B_1^\max=\emptyset$.
		We construct $\cB_\new$ as follows. Recall that $A_1\subseteq B_1$. We create a new block $\widehat{B}_1=B_1\cup A_2\cup \dots \cup A_{k}$, and for every block $B\in \cB$ such that $B\neq B_1$, we create a new block $\widehat{B}= B \setminus (A_2\cup \dots\cup A_{k})$. We define $\cB_\new=\{\widehat{B}_1\}\cup \{\widehat{B}\mid B\in \cB\setminus\{B_1\} \text{ and } \widehat{B}\neq \emptyset\}$.
		The construction of $\cB_\new$ is illustrated in Figure~\ref{fig:Bnew}.
		
\begin{figure}[ht]
    \centering
	\includegraphics[width=0.95\linewidth]{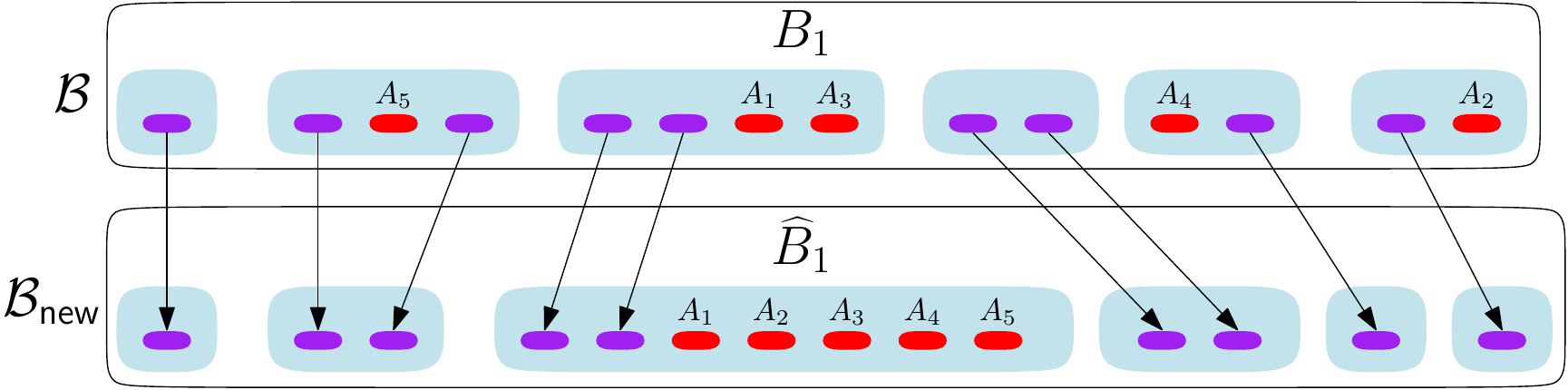}
    \caption{Construction of $\cB_\new$ from $\cB$ with $k=5$. The blocks of $\cB$ and $\cB_\new$ are in blue, the blocks $A_1,\dots,A_5$ of $\cA_{i-1}^P$ contained in $\notmin(w,X_i,\cA_{i-1}^P)$ are in red, the other blocks of $\cA_{i-1}^P$ are in purple. In each block of $\cB$ or $\cB_\new$, the blocks of $\cA_{i-1}^P$ are ordered w.r.t. $\leq_X$ from left to right.}
    \label{fig:Bnew}
\end{figure}
			
		We claim that $(X,\cB_\new)$ is a good partition that admits a good permutation starting with $(x_1,\dots,x_{i-1},w)$.
		By construction, $\cB_\new$ is a partition of $V(G)\setminus X$ and  $\cA_{i-1}^P \sqsubseteq \cB_\new$.
		As $\cc(G-X)\sqsubseteq \cA_{i-1}^P$ by definition, we have $\cc(G-X)\sqsubseteq \cB_\new$.
		So $(X,\cB_\new)$ satisfies Property~\ref{item:sqsubset} of Definition~\ref{def:goodpartition}.
		
		Observe that for every block $\widehat{B}$ of $\cB_\new$ such that $\widehat{B}\neq \widehat{B}_1$, $\widehat{B}$ is a subset of the block $B$ of $\cB$.
		Since $(X,\cB)$ is a good partition, $G[X\cup B]$ is an $X$-interval graph and hence $G[X\cup \widehat{B}]$ is also an $X$-interval graph.
		To prove that Property~\ref{item:Xinterval} is satisfied, it remains to show that $G[X\cup \widehat{B}_1]$ is an $X$-interval graph, for which we will use Claim~\ref{claim:leqX}.
		
		Since $G[X\cup B_1]$ is an $X$-interval graph we know by Claim~\ref{claim:comparable} that the blocks of $\cA_{i-1}^P$ contained in $B_1$ are pairwise comparable for $\leq_X$.
		Since $B_1^\max=\emptyset$, for every block $A\in \cA_{i-1}^P$ such that $A\subseteq B_1$ and $A$ is not contained in $\notmin(w,X_i,\cA_{i-1}^P)$, we have $A\leq_X A_1$.
		Moreover, we have $A_1\leq_X \dots \leq_X A_k$.
		We deduce that the blocks of $\cA_{i-1}^P$ contained in $\widehat{B}_1$ are pairwise comparable for $\leq_X$ because $\leq_X$ is transitive.
		By Claim~\ref{claim:leqX}, this implies that $G[X\cup \widehat{B}_1]$ is an $X$-interval graph.
		We conclude that Property~\ref{item:Xinterval} is satisfied.
		
		\medskip
		
		It remains to prove that $(X,\cB_\new)$ admits a good permutation starting with $(x_1,\dots,x_{i-1},w)$.
		We prove it with the following four claims.
		
		\begin{claim}\label{claim:start}
			For every $ j\in[i-1]$, $x_j$ is removable from $X_j$ for $\cB_\new$ and $w$ is removable from $X_i$ for $\cB_\new$.
			Moreover, $N(w)$ is minimal for $X_i$ in every block of $\cB_\new$ different from $\widehat{B}_1$.
		\end{claim}
		\begin{claimproof}
			For every $j\in [i-1]$, we have $\cA_j^P\sqsubseteq \cA_{i-1}^P$ and thus $x_j$ is removable from $X_j$ for $\cA_i^P$.
			By construction of $\cB_\new$, we have  $\cA_{i-1}^P \sqsubseteq \cB_\new$.
			We deduce that $x_j$ is removable from $X_j$ for $\cB_\new$ for every $j\in[i-1]$.
			By construction of $\cB_\new$, we have $\notmin(w,X_i, \cA_{i-1}^P)\subseteq \widehat{B}_1$.
			Hence, $N(w)$ is minimal for $X_i$ in every block of $\cB_\new$ different from $\widehat{B}_1$.
			It follows that $w$ is removable from $X_i$ for $\cB_\new$.			
		\end{claimproof}
		
		Let $p\in \{i,\dots,t\}$ such that $x_p=w$.
		
		\begin{claim}\label{claim:biggerThanP}
			For every $j,\ell\in[t]$ such that  $i \leq j <\ell$ and $p <\ell$, if $N(x_j)$ is minimal in $B_1$ for $X_j$, then $N(x_j)\cap \widehat{B}_1\subseteq N(x_\ell)$.
		\end{claim}
		\begin{claimproof}
			Let $j,\ell\in [t]$ such that  $i \leq j <\ell$ and $p <\ell$ and $N(x_j)$ is minimal in $B_1$ for $X_j$.
			Assume towards a contradiction that $N(x_j)\cap \widehat{B}_1\not\subseteq N(x_\ell)$.
			It follows that $N(x_\ell)\cap \widehat{B}_1\subset N(x_j)$ because $G[X\cup \widehat{B}_1]$ is an $X$-interval graph (these neighborhoods in $\widehat{B}_1$ are comparable).
			As $N(x_j)$ is minimal in $B_1$ for $X_j$ and $\ell\in X_i$, we have\footnote{In fact, we have $N(x_j)\cap B_1= N(x_\ell)\cap B_1$ since $B_1\subseteq \widehat{B}_1$ and  $N(x_\ell)\cap \widehat{B}_1\subset N(x_i)$.}  $N(x_j)\cap B_1 \subseteq N(x_\ell)$.
			So by construction of $\widehat{B}_1$, there exists a block $A\in \cA_{i-1}^P$ such that $A\subseteq \widehat{B}_1\setminus B_1$ and $N(x_\ell) \cap A \subset N(x_j)\cap A$.
			This implies that $x_\ell\notin\mincap(A)$.
			Since $A\subseteq \widehat{B}_1\setminus B_1 \subseteq \notmin(w,X_i,\cA_{i-1}^P)$ and $A\not\subseteq B_1$, we have $A_1\leq_X A$ and thus $\maxcap(A_1)\subseteq \mincap(A)$.
			We deduce that $x_\ell\notin \maxcap(A_1)$.
			
			By definition $A_1\subseteq \notmin(w,X_i,\cA_{i-1}^P)$ and thus $w\in \maxcap(A_1)\subseteq \mincap(A)$.
			Hence, we have $N(x_\ell)\cap A\subset N(w)\cap A$ and $N(x_\ell)\cap A_1\subset N(w)\cap A_1$.
			Since $p<\ell$ and $w=x_p$, we have $x_\ell\in X_p$.
			Consequently, $N(w)$ is not minimal in $A$ and $A_1$ for $X_p$.
			Let $B_A$ be the block of $\cB$ containing $A$.
			Since $A_1\subseteq B_1$ and $A\subseteq B_A$, $N(w)$ is not minimal in $B_1$ and $B_A$ for $X_p$.
			But $B_1$ and $B_A$ are two distinct blocks of $\cB$ because $A\not\subseteq B_1$, this is a contradiction with $w$ being removable from $X_p$ for $\cB$.
		\end{claimproof}
		
		\begin{claim}\label{claim:end}
			For every $j\in\{p+1,\dots,t\}$, the vertex $x_j$ is removable from $X_j$ for $\cB_\new$.
		\end{claim}
		\begin{claimproof}
			Let $j\in \{p+1,\dots,t\}$.
			The vertex $x_j$ is removable from $X_j$ from $\cB$.
			So there exists a block $B_{x_j}\in \cB$ such that $N(x_j)$ is minimal for $X_j$ in every block of $\cB$ different from $B_{x_j}$.
			By construction, every block $\widehat{B}$ of $\cB_\new$ is associated with a block $B$ of $\cB$ and if $B\neq B_1$, we have $B\subseteq \widehat{B}$.
			Thus, for every block $\widehat{B}\in \cB_\new$ different from $B_1$ and $B_{x_j}$, $N(x_j)$ is minimal in $\widehat{B}$ for $X_j$ because $\widehat{B}\subseteq B$ and $N(x_j)$ is minimal in $B$ for $X_j$.
			If $B_{x_j}=B_1$, then $x_j$ is removable from $X_j$ for $\cB_\new$.
			
			Now suppose that $B_{x_j}\neq B_1$.
			It follows that $N(x_j)$ is minimal for $X_j$ in $B_1$.
			As $p<j$, by Claim~\ref{claim:biggerThanP}, for every $\ell > j$, $N(x_j)\cap \widehat{B}_1\subseteq N(x_\ell)$.
			Thus, $N(x_j)$ is minimal in $\widehat{B}_1$ for $X_j$.
			Consequently, $N(x_j)$ is minimal in every block of $\cB_\new$ different from $\widehat{B}_{x_j}$.
			In both cases, $x_j$ is removable from $X_j$ for $\cB_\new$.
		\end{claimproof}
		
		It remains to deal with the vertices of $(x_1,\dots,x_t)$ between $x_i$ and $x_{p-1}$.
	
		\begin{claim}\label{claim:middle}
			There exists a permutation $(x^\star_{i+1},\dots,x^\star_p)$ of $\{x_i,\dots,x_{p-1}\}$ such that, for every $j\in \{i+1,\dots,p\}$, $x^\star_j$ is removable from $\{x^\star_j,\dots,x^\star_p\}\cup X_{p+1}$ for $\cB_\new$.
		\end{claim}
		\begin{claimproof}
		    Since $(x_1,\dots,x_t)$ is a good permutation of $(X,\cB)$, for every $j\in\{i,\dots,p-1\}$, there exists a block $B_{x_j}$ of $\cB$ such that $N(x_j)$ is minimal for $X_j$ in every block of $\cB$ different from $B_{x_j}$.
		    Let $j\in \{i,\dots,p-1\}$. We start by proving that 
		    ($\spadesuit$) $N(x_j)$ is minimal for $X_i$ in every block of $\cB_\new$ different from $\widehat{B}_{x_j}$ and $\widehat{B}_1$.
		    Take $\widehat{B}\in\cB_\new$ different from $\widehat{B}_{x_j}$ and $\widehat{B}_1$.
		    By construction, $\widehat{B}$ is associated with a block $B$ of $\cB$.
		    Since $\widehat{B}\neq \widehat{B}_{x_j}$, $N(x_j)$ is minimal in $B$ for $X_j$.
		    As $j<p$ and $w=x_p$, we have $w\in X_j$ and thus $N(x_j)\cap B \subseteq N(w)$.
		    By Claim~\ref{claim:start}, $N(w)$ is minimal for $X_i$ in every block of $\cB_\new$ different from $\widehat{B}_1$.
			In particular, $N(w)$ is minimal in $\widehat{B}$ for $X_i$.
			As $N(x_j)\cap B \subseteq N(w)$ and $\widehat{B}\subseteq B$, we deduce that $N(x_j)$ is also minimal in $\widehat{B}$ for $X_i$. This proves ($\spadesuit$).
		
			Since $G[X\cup \widehat{B}_1]$ is an $X$-interval graph, the neighborhoods of $X$ in $\widehat{B}_1$ are pairwise comparable for the inclusion. Thus, there exists a permutation $(x_{i+1}^\star,\dots,x_p^\star)$ of $\{x_i,\dots,x_{p-1}\}$ such that $N(x_{i+1}^\star)\cap \widehat{B}_1 \subseteq N(x_{i+2}^\star)\cap \widehat{B}_1 \subseteq \dots \subseteq N(x_p^\star)\cap \widehat{B}_1$.

			Let $j\in \{i+1,\dots,p\}$ and $X_j^\star= \{x^\star_j,\dots,x^\star_p\}\cup X_{p+1}$.
			We need to show that $x^\star_j$ is removable from $X_j^\star$ for $\cB_\new$.
		    As $X_j^\star$ is a subset of $X_i$, from ($\spadesuit$), we know that $N(x_j^\star)$ is minimal for $X_j^\star$ in every block of $\cB_\new$ different from $\widehat{B}_1$ and $\widehat{B}_{x_j^\star}$.
		    Thus, if $\widehat{B}_1= \widehat{B}_{x_j^\star}$, then $x_j^\star$ is removable from $X_j^\star$ for $\cB_\new$.
		    
		    Suppose that $\widehat{B}_1\neq \widehat{B}_{x_j^\star}$. 
		    Then $N(x_j^\star)$ is minimal in $\widehat{B}_{1}$ for $X_q$ with $q\in [t]$ such that $x_j^\star=x_q$. 
		    By Claim~\ref{claim:biggerThanP}, for every $x_\ell\in X_{p+1}$, we have $N(x_j^\star)\cap \widehat{B}_1\subseteq N(x_\ell)$.
		    Since $N(x_{j}^\star)\cap \widehat{B}_1 \subseteq N(x_{j+1}^\star)\cap \widehat{B}_1 \subseteq \dots \subseteq N(x_p^\star)\cap \widehat{B}_1$ and $X_j^\star= \{x^\star_j,\dots,x^\star_p\}\cup X_{p+1}$, $N(x_j)$ is minimal for $X_j^\star$ in $\widehat{B}_1$.
		    We deduce that $N(x_j^\star)$ is minimal for $X_j^\star$ in every block of $\cB_\new$ different from $\widehat{B}_{x_j^\star}$.
		    Thus, $x_j^\star$ is removable from $X_j^\star$ for $\cB_\new$.
		\end{claimproof}
		
	Let $(x^\star_{i+1},\dots,x^\star_p)$ be the permutation of $\{x_i,\dots,x_{p-1}\}$ satisfying the condition of Claim~\ref{claim:middle}.
	Take $P_\new=(x_1,\dots,x_{i-1},w,x^\star_{i+1},\dots,x^\star_p,x_{p+1},\dots,x_t)$, from Claims~\ref{claim:start},~\ref{claim:end} and~\ref{claim:middle}, we deduce that $P_\new$ is a good permutation of $(X,\cB_{\new})$.
	Hence, $(X,\cB_{\new})$ satisfies Property~\ref{item:elimination:order} of Definition~\ref{def:goodpartition} and $(X,\cB_{\new})$ is a good partition of $G$.
	As $P_\new$ starts with $(x_1,\dots,x_{i-1},w)$, this concludes the proof of Lemma~\ref{lem:noMistake}.
\end{proof}

\begin{lemma} \label{lemma:findsone}
    If $G$ has a good partition then Algorithm \ref{algo:goodpartition} returns one.
\end{lemma}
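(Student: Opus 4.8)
The plan is to prove the statement directly: assuming $G$ admits a good partition, I would exhibit an execution of Algorithm~\ref{algo:goodpartition} that returns one. Let $(X,\cB)$ be a good partition of $G$ with good permutation $P=(x_1,\dots,x_t)$. Since $X$ is a maximal clique it belongs to the set $\cX$ enumerated in the first line, so the algorithm eventually processes it. First I would check that the guard in Line~\ref{line:checking1} does not skip the iteration for $X$: by Property~\ref{item:sqsubset} every $C\in\cc(G-X)$ lies inside a single block $B\in\cB$, and since $G[X\cup B]$ is an $X$-interval graph by Property~\ref{item:Xinterval}, the connectedness of $C$ lets us restrict the clique path of $G[X\cup B]$ to the contiguous cliques meeting $C$, showing that $G[X\cup C]$ is itself an $X$-interval graph. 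Hence the algorithm enters the while loop with $\cA=\cc(G-X)$.

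The heart of the argument is a loop invariant proved by induction on the iteration index $r$: at the start of iteration $r$, having already selected $w_1,\dots,w_{r-1}$, there is a good partition $(X,\cB^{(r)})$ and a good permutation $P^{(r)}$ of it whose first $r-1$ entries are exactly $w_1,\dots,w_{r-1}$, and moreover the current partition satisfies $\cA=\cA^{P^{(r)}}_{r-1}$. The observation that makes this coherent is that $\cA^{P}_{i}$ depends only on the first $i$ entries of $P$ (and on $\cc(G-X)$), so permutations sharing a prefix produce the same intermediate partitions. For the base case $r=1$ I take $P^{(1)}=P$ and $\cB^{(1)}=\cB$, with $\cA=\cc(G-X)=\cA^{P}_{0}$.

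For the inductive step, set $W=X\setminus\{w_1,\dots,w_{r-1}\}$, which is precisely the tail of $P^{(r)}$. Applying Lemma~\ref{lem:goodPermut} to $P^{(r)}$ shows that its $r$-th vertex satisfies the condition of Line~\ref{line:choosingw} with respect to $\cA=\cA^{P^{(r)}}_{r-1}$, so a candidate exists and the algorithm does not break. The algorithm then selects some $w_r=w$ meeting the condition of Line~\ref{line:choosingw}, possibly different from the $r$-th vertex of $P^{(r)}$; this is exactly where Lemma~\ref{lem:noMistake} is needed. Applied to $P^{(r)}$ with index $r$ and the chosen $w$, it yields a good permutation $P^{(r+1)}$ of some good partition $(X,\cB^{(r+1)})$ beginning with $(w_1,\dots,w_{r-1},w_r)$. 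Since $P^{(r+1)}$ and $P^{(r)}$ share the prefix $w_1,\dots,w_{r-1}$ we have $\cA^{P^{(r+1)}}_{r-1}=\cA^{P^{(r)}}_{r-1}=\cA$, and the merge in Line~\ref{line:constructAp} replaces the blocks inside $\notmin(w_r,W,\cA)$ by their union, which is exactly the transformation defining $\cA^{P^{(r+1)}}_{r}$ from $\cA^{P^{(r+1)}}_{r-1}$. Thus after the merge $\cA=\cA^{P^{(r+1)}}_{r}$ and the invariant passes to iteration $r+1$.

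Iterating, the while loop never breaks, so after $t$ iterations $W=\emptyset$ and the algorithm returns $(X,\cA)$, which is a good partition by Lemma~\ref{lem:goodOutput}. The main obstacle I anticipate is the bookkeeping of this invariant — in particular confirming that the algorithm's partition coincides with $\cA^{P^{(r)}}_{r-1}$ for the permutation supplied by Lemma~\ref{lem:noMistake}, and that replacing $P^{(r)}$ by $P^{(r+1)}$ leaves the already-matched prefix untouched. All geometric content has already been absorbed into Lemmas~\ref{lem:goodPermut} and~\ref{lem:noMistake}, so what remains is purely this invariant-tracking together with the routine prefix-dependence fact about $\cA^{P}_{i}$.
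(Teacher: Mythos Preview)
Your proposal is correct and follows essentially the same approach as the paper's proof: both maintain the invariant that after choosing $w_1,\dots,w_{r-1}$ there exists a good permutation of $X$ starting with this prefix, use Lemma~\ref{lem:goodPermut} to guarantee a valid choice at Line~\ref{line:choosingw}, and invoke Lemma~\ref{lem:noMistake} to restore the invariant after the algorithm's (possibly different) choice. Your version is slightly more explicit than the paper's---you spell out the prefix-dependence of $\cA^P_i$, track the equality $\cA=\cA^{P^{(r)}}_{r-1}$ in the invariant, and verify that Line~\ref{line:checking1} does not skip $X$---but these are elaborations of steps the paper leaves implicit rather than a different argument.
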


\begin{proof}
	Suppose $G$ admits a good partition with central clique $X$. We prove the following invariant holds at the end of the $i$-th iteration of the while loop for $X$.
	
	{\bf Invariant.} $G$ admits a good permutation starting with $w_1,...,w_i$.

	By assumption, $X$ admits a good permutation and thus the invariant holds before the algorithm starts the first iteration of the while loop.
	By induction, assume the invariant holds when Algorithm~\ref{algo:goodpartition} starts the $i$-th iteration of the while loop.
	Let $L=(w_1,\dots,w_{i-1})$ be the consecutive vertices chosen at Line~\ref{line:choosingw} before the start of iteration $i$ (observe that $L$ is empty when $i=1$).
	The invariant implies that there exists a good permutation $P=(w_1,\dots,w_{i-1},x_i,x_{i+1},...,x_t)$ of $G$ starting with $L$.
	By Lemma~\ref{lem:goodPermut}, the graph $G[X\cup \notmin(x_{i},\{x_{i},\dots,x_t\},\cA_{i-1}^P)]$ is an $X$-interval graph.
	Observe that $\cA^P_{i-1}$ and $\{x_i,\dots,x_t\}$ are the values of the variables $\cA$ and $W$ when Algorithm~\ref{algo:goodpartition} starts the $i$-th iteration.
	Thus, at the start of the $i$-th iteration, there exists a vertex $w_i$ such that $G[X\cup \notmin(w_i,W,\cA)]$ is an $X$-interval graph.
	Consequently, the algorithm does not return \textsf{no} at the $i$-th iteration and chooses a vertex $w_i\in \{x_i,\dots,x_t\}$ such that the graph $G[X\cup \notmin(w_i,\{x_i,\dots,x_t\},\cA^P_i)]$ is an $X$-interval graph.
	By Lemma~\ref{lem:noMistake}, $G$ admits a good permutation starting with $(w_1,\dots,w_{i-1},w_i)$.
	Thus, the invariant holds at the end of the $i$-th iteration.
	If at the end of the $i$-th iteration $W$ is empty, then  the while loop stops and the algorithm returns a pair $(X,\cA)$.
	Otherwise, the algorithm starts an $i+1$-st iteration and the invariant holds at the start of this new iteration.
	By induction,  the invariant holds at every step and Algorithm~\ref{algo:goodpartition} returns a pair $(X,\cA)$ which is a good partition by Lemma \ref{lem:goodOutput}. 
\end{proof}

\begin{theorem}\label{thm:algo1}
Algo. \ref{algo:goodpartition} decides in polynomial time if $G$ admits a star NeS model.
\end{theorem}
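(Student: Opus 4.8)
The plan is to split the argument into a correctness part, which is mostly a combination of results already established, and a running-time part, whose only delicate point is the cost of a single $X$-interval test.

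For correctness, I would first justify the chordality precondition: a star NeS model is in particular a NeS model, so a graph admitting one is a leaf power and hence chordal; thus returning \textsf{no} on the first line for non-chordal $G$ is sound. Assuming $G$ is chordal, I would then combine the two completed directions. By Lemma~\ref{lem:goodOutput}, any pair $(X,\cB)$ that the algorithm returns is a good partition of $G$; conversely, by Lemma~\ref{lemma:findsone}, if $G$ admits any good partition then the algorithm returns one. Hence the algorithm outputs a good partition iff $G$ has a good partition, which by Theorem~\ref{thm:goodpartition} holds iff $G$ admits a star NeS model. Reading a returned pair as ``yes'' and \textsf{no} as ``no'' therefore gives a correct decision procedure, and Theorem~\ref{thm:goodpartition} additionally lets us convert the returned good partition into an explicit star NeS model in polynomial time.

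For the running time, I would bound the nested loops. Testing chordality and listing the $O(n)$ maximal cliques costs $O(n+m)$. The outer loop runs once per maximal clique, i.e.\ $O(n)$ times. Within it, each iteration of the while loop either deletes one vertex from $W$ or breaks, so it runs $O(\abs{X})=O(n)$ times; each such iteration scans the $O(\abs{W})=O(n)$ candidates $w\in W$, and for every candidate it computes $\notmin(w,W,\cA)$ and performs one $X$-interval test. Computing $\notmin(w,W,\cA)$ only requires, for each current block $A\in\cA$, verifying the inclusions $N(w)\cap A\subseteq N(y)$ over $y\in W$, which is polynomial, and the merge in Line~\ref{line:constructAp} is equally cheap. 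Multiplying these polynomial factors yields an overall polynomial bound.

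The step I expect to require the most care is establishing that a single $X$-interval test of Definition~\ref{def:Xinterval} runs in polynomial time, since it is called polynomially many times. For $H=G[X\cup S]$ with $S\subseteq V(G)\setminus X$, I would first observe that $X$ remains a \emph{maximal} clique of $H$: no vertex of $V(G)\setminus X$ is adjacent to all of $X$, for otherwise $X$ would not be maximal in $G$. Deciding the $X$-interval property then amounts to deciding whether $H$ is an interval graph admitting a clique path with the maximal clique $X$ at one end. This reduces to interval-graph recognition, solvable in $O(n+m)$ time, followed by a test of whether $X$ can occupy an endpoint of a clique-path ordering; the latter is decidable in polynomial time from a compact representation of all clique-path orderings of $H$, such as a PQ-tree. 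Combining this polynomial per-call cost with the polynomial number of calls finishes the running-time analysis, which together with the correctness argument proves the theorem.
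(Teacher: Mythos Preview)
Your proof is correct and follows essentially the same structure as the paper's: combine Theorem~\ref{thm:goodpartition} with Lemmata~\ref{lem:goodOutput} and~\ref{lemma:findsone} for correctness, then bound the nested loops and argue that the $X$-interval test is polynomial. The one noteworthy difference is in how you implement that test: you invoke PQ-trees to decide whether the maximal clique $X$ can sit at an end of some clique path of $H$, whereas the paper uses a more elementary gadget---adjoin two new vertices $u,v$ with $N(u)=\{v\}$ and $N(v)=\{u\}\cup X$, and observe that the resulting graph is interval iff $H$ is $X$-interval. Both are valid; the gadget reduction has the advantage of collapsing the test to a single call to any off-the-shelf interval-graph recognizer, with no need to reason about PQ-tree endpoints.
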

\begin{proof}
	By Theorem~\ref{thm:goodpartition} $G$ admits a star NeS model iff $G$ admits a good partition, and by Lemmata~\ref{lem:goodOutput} and \ref{lemma:findsone} Algorithm~\ref{algo:goodpartition} finds a good partition iff the input graph has a good partition.
	Let us argue for the runtime.
Checking that $G$ is chordal and finding the $O(n)$ maximal cliques can be done in polynomial time~\cite{HabibMPV00,RoseTL76}.
	Given $X,Y\subseteq V(G)$ we check whether $G[X\cup Y]$ is an $X$-interval graph, as follows.
	Take $G'$ the graph obtained from $G[X\cup Y]$ by adding $u$ and $v$ such that $N(u)=\{v\}$ and $N(v)=\{u\}\cup X$.
	It is easy to see that $G[X\cup Y]$ is an $X$-interval graph iff $G'$ is an interval graph, which can be checked in polynomial time.
\end{proof}

\section{Conclusion}

	The question if leaf powers can be recognized in polynomial time was raised over 20 years ago \cite{NishimuraRT02}, and this problem related to phylogenetic trees still remains wide open. We have shown that polynomial-time recognition can be achieved if the weighted leaf root is required to be a caterpillar (using the connection to blue-red interval graphs and co-TT graphs) or if the NeS model has a single large-degree branching. 
	
	We strongly believe our results could be combined to recognize in polynomial time the graphs with a leaf root whose internal vertices induce a subdivided star, that is a subdivided star with some leaves added to it (this generalizes linear leaf powers as the internal vertices of a caterpillar induce a path).
	%
	Let us instead sketch how the techniques in this paper could be generalized to handle the general case of leaf powers.
    Given a NeS model $\cN=(\cT,(T_v)_{v\in V(G)})$  of a graph $G$ with $\cT$ the embedding of a tree $T$, we define the \emph{topology} of $\cN$ as the pair $(T',(X_u)_{u\in V(T')})$ where $T'$ is the subtree induced by the internal nodes of $T$ and for each node $u$ of $T'$,
    we define $X_u=\{ v \in V(G):u \in T_v\}$, which will be a clique of $G$. For a star NeS model, $T'$ is a single node and we proved that (1)~if a graph admits a star NeS model, then it admits one where the only clique in the topology is maximal and (2)~given a topology with $T'$ a single node, we can decide in polynomial time whether the graph admits a star NeS model with this topology.
    For the general case it would suffice to solve the following problems:
	\begin{itemize}
	    \item Problem 1: given a graph $G$ compute in polynomial time a polynomial-sized family $\cF$ of topologies such that $G$ is a leaf power if and only if $G$ admits a NeS model with topology in $\cF$.
        A first step towards solving this problem would be to generalize Claim~\ref{claim:maximalclique} to show that we can always modify a NeS model so that all the cliques of its topology become maximal cliques (or minimal separators, as these are also manageable in chordal graphs). 
        A second step is to study the structural properties of strongly chordal graphs with the aim of showing a connection between these properties and a family $\cF$ of topologies of NeS models. 
        One candidate among these structural properties is the notion of \emph{clique arrangements} introduced by Nevries and Rosenke~\cite{NR15,NevriesR16}, which describes the intersections between the maximal cliques of a chordal graph $G$ more precisely than the clique trees. 
    	  
        \item Problem~2: given a graph and a topology, construct a NeS model with this topology or confirm that no such NeS model exists, in polynomial time.
        For star NeS models, this is handled inside the for loop of Algorithm~\ref{algo:goodpartition} and relies on Definition \ref{def:goodpartition}. A step towards solving this problem would be to generalize the combinatorial definition of a good partition to handle any topology.
	\end{itemize}
	We suggest to first employ this two-problems approach to graphs admitting NeS models with a very simple topology.
    Natural first cases are topologies $(T,(X_u)_{u\in V(T)})$ where either $T$ is an edge or where $\bigcup_{u\in V(T)} X_u$ is a clique. Both these restrictions generalize the topologies of star NeS models and positive results should produce tools useful to attack the general case.
    
    Let us end with a subproblem whose solution would simplify this approach towards settling the general case.
    Inspired by blue-red interval graphs, we can define blue-red NeS models where the red vertices 
    induce an independent set and for every red vertex $v$ its neighbors are the blue vertices $u$ such that $T_v\subseteq T_u$. 
     We can adapt the proof of Theorem~\ref{thm:equivalence} to show that $G$ admits a NeS model iff $G$ has a blue-red NeS model where the blue vertices are simplicial vertices without true twins,
   so these have the same modeling power.
    However, for blue-red NeS models the necessary topologies can be significantly simpler.
    For example, NeS models of linear leaf powers are caterpillars, whose topologies are paths, while its blue-red NeS models can be restricted to an edge, whose topology is empty.
    Can we generalize the concept of blue-red NeS models so that we can recursively color more vertices  blue (by allowing some of them to be adjacent under some conditions) and thereby simplify the topologies we need to consider to solve Problems 1 and 2?

\bibliography{biblio.bib}

\begin{thebibliography}{10}

\bibitem{BibelnieksD93}
Eric Bibelnieks and Perino~M. Dearing.
\newblock Neighborhood subtree tolerance graphs.
\newblock {\em Discret. Appl. Math.}, 43(1):13--26, 1993.
\newblock \href {https://doi.org/10.1016/0166-218X(93)90165-K}
  {\path{doi:10.1016/0166-218X(93)90165-K}}.

\bibitem{BrandstadtH08}
Andreas Brandst{\"{a}}dt and Christian Hundt.
\newblock Ptolemaic graphs and interval graphs are leaf powers.
\newblock In {\em {LATIN} 2008: Theoretical Informatics, 8th Latin American
  Symposium, B{\'{u}}zios, Brazil, April 7-11, 2008, Proceedings}, volume 4957
  of {\em Lecture Notes in Computer Science}, pages 479--491. Springer, 2008.
\newblock \href {https://doi.org/10.1007/978-3-540-78773-0\_42}
  {\path{doi:10.1007/978-3-540-78773-0\_42}}.

\bibitem{BrandstadtHMW10}
Andreas Brandst{\"{a}}dt, Christian Hundt, Federico Mancini, and Peter Wagner.
\newblock Rooted directed path graphs are leaf powers.
\newblock {\em Discret. Math.}, 310(4):897--910, 2010.
\newblock \href {https://doi.org/10.1016/j.disc.2009.10.006}
  {\path{doi:10.1016/j.disc.2009.10.006}}.

\bibitem{BrandstadtL06}
Andreas Brandst{\"{a}}dt and Van~Bang Le.
\newblock Structure and linear time recognition of 3-leaf powers.
\newblock {\em Inf. Process. Lett.}, 98(4):133--138, 2006.
\newblock \href {https://doi.org/10.1016/j.ipl.2006.01.004}
  {\path{doi:10.1016/j.ipl.2006.01.004}}.

\bibitem{BrandstadtLS08}
Andreas Brandst{\"{a}}dt, Van~Bang Le, and R.~Sritharan.
\newblock Structure and linear-time recognition of 4-leaf powers.
\newblock {\em {ACM} Trans. Algorithms}, 5(1):11:1--11:22, 2008.
\newblock \href {https://doi.org/10.1145/1435375.1435386}
  {\path{doi:10.1145/1435375.1435386}}.

\bibitem{CalamoneriFS14}
Tiziana Calamoneri, Antonio Frangioni, and Blerina Sinaimeri.
\newblock Pairwise compatibility graphs of caterpillars.
\newblock {\em Comput. J.}, 57(11):1616--1623, 2014.
\newblock \href {https://doi.org/10.1093/comjnl/bxt068}
  {\path{doi:10.1093/comjnl/bxt068}}.

\bibitem{CalamoneriS16}
Tiziana Calamoneri and Blerina Sinaimeri.
\newblock Pairwise compatibility graphs: {A} survey.
\newblock {\em {SIAM} Rev.}, 58(3):445--460, 2016.
\newblock \href {https://doi.org/10.1137/140978053}
  {\path{doi:10.1137/140978053}}.

\bibitem{ChangK07}
Maw{-}Shang Chang and Ming{-}Tat Ko.
\newblock The 3-steiner root problem.
\newblock In Andreas Brandst{\"{a}}dt, Dieter Kratsch, and Haiko M{\"{u}}ller,
  editors, {\em Graph-Theoretic Concepts in Computer Science, 33rd
  International Workshop, {WG} 2007, Dornburg, Germany, June 21-23, 2007.
  Revised Papers}, volume 4769 of {\em Lecture Notes in Computer Science},
  pages 109--120. Springer, 2007.
\newblock \href {https://doi.org/10.1007/978-3-540-74839-7\_11}
  {\path{doi:10.1007/978-3-540-74839-7\_11}}.

\bibitem{Tumor2}
Alexander Davis, Ruli Gao, and Nicholas Navin.
\newblock Tumor evolution: Linear, branching, neutral or punctuated?
\newblock {\em Biochimica et Biophysica Acta (BBA) - Reviews on Cancer},
  1867(2):151--161, 2017.
\newblock Evolutionary principles - heterogeneity in cancer?
\newblock \href {https://doi.org/https://doi.org/10.1016/j.bbcan.2017.01.003}
  {\path{doi:https://doi.org/10.1016/j.bbcan.2017.01.003}}.

\bibitem{Diestel12}
Reinhard Diestel.
\newblock {\em Graph Theory, 4th Edition}, volume 173 of {\em Graduate texts in
  mathematics}.
\newblock Springer, 2012.

\bibitem{DomGHN05}
Michael Dom, Jiong Guo, Falk H{\"{u}}ffner, and Rolf Niedermeier.
\newblock Extending the tractability border for closest leaf powers.
\newblock In Dieter Kratsch, editor, {\em Graph-Theoretic Concepts in Computer
  Science, 31st International Workshop, {WG} 2005, Metz, France, June 23-25,
  2005, Revised Selected Papers}, volume 3787 of {\em Lecture Notes in Computer
  Science}, pages 397--408. Springer, 2005.
\newblock \href {https://doi.org/10.1007/11604686\_35}
  {\path{doi:10.1007/11604686\_35}}.

\bibitem{Ducoffe19}
Guillaume Ducoffe.
\newblock The 4-steiner root problem.
\newblock In Ignasi Sau and Dimitrios~M. Thilikos, editors, {\em
  Graph-Theoretic Concepts in Computer Science - 45th International Workshop,
  {WG} 2019, Vall de N{\'{u}}ria, Spain, June 19-21, 2019, Revised Papers},
  volume 11789 of {\em Lecture Notes in Computer Science}, pages 14--26.
  Springer, 2019.
\newblock \href {https://doi.org/10.1007/978-3-030-30786-8\_2}
  {\path{doi:10.1007/978-3-030-30786-8\_2}}.

\bibitem{EppsteinH20}
David Eppstein and Elham Havvaei.
\newblock Parameterized leaf power recognition via embedding into graph
  products.
\newblock {\em Algorithmica}, 82(8):2337--2359, 2020.
\newblock \href {https://doi.org/10.1007/s00453-020-00720-8}
  {\path{doi:10.1007/s00453-020-00720-8}}.

\bibitem{GolovachHLMSSS17}
Petr~A. Golovach, Pinar Heggernes, Nathan Lindzey, Ross~M. McConnell,
  Vin{\'{\i}}cius~Fernandes dos Santos, Jeremy~P. Spinrad, and Jayme~Luiz
  Szwarcfiter.
\newblock On recognition of threshold tolerance graphs and their complements.
\newblock {\em Discret. Appl. Math.}, 216:171--180, 2017.
\newblock \href {https://doi.org/10.1016/j.dam.2015.01.034}
  {\path{doi:10.1016/j.dam.2015.01.034}}.

\bibitem{GolumbicWL14}
Martin~Charles Golumbic, Nirit~Lefel Weingarten, and Vincent Limouzy.
\newblock Co-tt graphs and a characterization of split co-tt graphs.
\newblock {\em Discret. Appl. Math.}, 165:168--174, 2014.
\newblock \href {https://doi.org/10.1016/j.dam.2012.11.014}
  {\path{doi:10.1016/j.dam.2012.11.014}}.

\bibitem{HabibMPV00}
Michel Habib, Ross~M. McConnell, Christophe Paul, and Laurent Viennot.
\newblock Lex-bfs and partition refinement, with applications to transitive
  orientation, interval graph recognition and consecutive ones testing.
\newblock {\em Theor. Comput. Sci.}, 234(1-2):59--84, 2000.
\newblock \href {https://doi.org/10.1016/S0304-3975(97)00241-7}
  {\path{doi:10.1016/S0304-3975(97)00241-7}}.

\bibitem{JaffkeKST19}
Lars Jaffke, O{-}joung Kwon, Torstein J.~F. Str{\o}mme, and Jan~Arne Telle.
\newblock Mim-width {III.} graph powers and generalized distance domination
  problems.
\newblock {\em Theor. Comput. Sci.}, 796:216--236, 2019.
\newblock \href {https://doi.org/10.1016/j.tcs.2019.09.012}
  {\path{doi:10.1016/j.tcs.2019.09.012}}.

\bibitem{Lafond17}
Manuel Lafond.
\newblock On strongly chordal graphs that are not leaf powers.
\newblock In {\em Graph-Theoretic Concepts in Computer Science - 43rd
  International Workshop, {WG} 2017, Eindhoven, The Netherlands, June 21-23,
  2017, Revised Selected Papers}, pages 386--398, 2017.
\newblock \href {https://doi.org/10.1007/978-3-319-68705-6\_29}
  {\path{doi:10.1007/978-3-319-68705-6\_29}}.

\bibitem{Lafond22}
Manuel Lafond.
\newblock Recognizing k-leaf powers in polynomial time, for constant k.
\newblock In {\em Proceedings of the 2022 Annual ACM-SIAM Symposium on Discrete
  Algorithms (SODA)}, pages 1384--1410. {SIAM}, 2022.
\newblock URL: \url{https://epubs.siam.org/doi/abs/10.1137/1.9781611977073.58},
  \href
  {http://arxiv.org/abs/https://epubs.siam.org/doi/pdf/10.1137/1.9781611977073.58}
  {\path{arXiv:https://epubs.siam.org/doi/pdf/10.1137/1.9781611977073.58}},
  \href {https://doi.org/10.1137/1.9781611977073.58}
  {\path{doi:10.1137/1.9781611977073.58}}.

\bibitem{MonmaRT88}
Clyde~L. Monma, Bruce~A. Reed, and William~T. Trotter.
\newblock Threshold tolerance graphs.
\newblock {\em J. Graph Theory}, 12(3):343--362, 1988.
\newblock \href {https://doi.org/10.1002/jgt.3190120307}
  {\path{doi:10.1002/jgt.3190120307}}.

\bibitem{NR15}
Ragnar Nevries and Christian Rosenke.
\newblock Characterizing and computing the structure of clique intersections in
  strongly chordal graphs.
\newblock {\em Discrete Applied Mathematics}, 181:221--234, 2015.
\newblock URL:
  \url{https://www.sciencedirect.com/science/article/pii/S0166218X14003941},
  \href {https://doi.org/https://doi.org/10.1016/j.dam.2014.09.003}
  {\path{doi:https://doi.org/10.1016/j.dam.2014.09.003}}.

\bibitem{NevriesR16}
Ragnar Nevries and Christian Rosenke.
\newblock Towards a characterization of leaf powers by clique arrangements.
\newblock {\em Graphs Comb.}, 32(5):2053--2077, 2016.
\newblock \href {https://doi.org/10.1007/s00373-016-1707-x}
  {\path{doi:10.1007/s00373-016-1707-x}}.

\bibitem{NishimuraRT02}
Naomi Nishimura, Prabhakar Ragde, and Dimitrios~M. Thilikos.
\newblock On graph powers for leaf-labeled trees.
\newblock {\em J. Algorithms}, 42(1):69--108, 2002.
\newblock \href {https://doi.org/10.1006/jagm.2001.1195}
  {\path{doi:10.1006/jagm.2001.1195}}.

\bibitem{RoseTL76}
Donald~J. Rose, Robert~Endre Tarjan, and George~S. Lueker.
\newblock Algorithmic aspects of vertex elimination on graphs.
\newblock {\em {SIAM} J. Comput.}, 5(2):266--283, 1976.
\newblock \href {https://doi.org/10.1137/0205021} {\path{doi:10.1137/0205021}}.

\bibitem{Tumor1}
Erfan {Sadeqi Azer}, Mohammad {Haghir Ebrahimabadi}, Salem Malikić, Roni
  Khardon, and S.~Cenk Sahinalp.
\newblock Tumor phylogeny topology inference via deep learning.
\newblock {\em iScience}, 23(11):101655, 2020.
\newblock \href {https://doi.org/https://doi.org/10.1016/j.isci.2020.101655}
  {\path{doi:https://doi.org/10.1016/j.isci.2020.101655}}.

\bibitem{Tamir1983}
A.~Tamir.
\newblock A class of balanced matrices arising from location problems.
\newblock {\em Siam Journal on Algebraic and Discrete Methods}, 4:363--370,
  1983.

\end{thebibliography}

\end{document}